\newtheorem{corollary}{Corollary}
\newtheorem{remark}{Remark}
\newtheorem{proposition}{Proposition}
\newcommand{\figref}[1]{{Fig.}~\ref{#1}}
\def\bb0{{\mathbb{0}}}
\def\ba{{\mathbf{a}}}
\def\bb{{\mathbf{b}}}
\def\bn{{\mathbf{n}}}
\def\bp{{\mathbf{p}}}
\def\br{{\mathbf{r}}}
\def\bs{{\mathbf{s}}}
\def\bv{{\mathbf{v}}}
\def\bx{{\mathbf{x}}}
\def\b0{{\mathbf{0}}}
\def\bA{{\mathbf{A}}}
\def\bF{{\mathbf{F}}}
\def\bG{{\mathbf{G}}}
\def\bH{{\mathbf{H}}}
\def\bI{{\mathbf{I}}}
\def\bQ{{\mathbf{Q}}}
\def\bU{{\mathbf{U}}}
\def\bV{{\mathbf{V}}}
\def\bW{{\mathbf{W}}}
\def\cH{\mathcal{H}}
\def\sf0{{\mathsf{0}}}
\def\rmN{\mathrm{N}}
\def\rm0{{\mathrm{0}}}
\def\Nt{{N_\mathrm{t}}}
\def\Nr{{N_\mathrm{r}}}
\def\Nct{{N_\mathrm{RF}^{\mathrm{t}}}}
\def\Ncr{{N_\mathrm{RF}^{\mathrm{r}}}}
\def\Ns{{N_\mathrm{s}}}
\def\Pt{{P_{\mr{t}}}}
\def\Hb{{\mathcal{H}_{\mr{b}}}}
\newcommand{\mb}{\mathbf}
\newcommand{\mr}{\mathrm}
\def\j{\mathrm{j}}
\def\Re{\mathrm{Re}}
\def\Im{\mathrm{Im}}
\acrodef{CSI}[CSI]{channel state information}
\acrodef{CSIT}[CSIT]{channel state information at the transmitter}
\acrodef{CSIR}[CSIR]{channel state information at the receiver}
\acrodef{MIMO}[MIMO]{multiple-input multiple-output}
\acrodef{SISO}[SISO]{single-input single-output}
\acrodef{MISO}[MISO]{multiple-input single-output}
\acrodef{SIMO}[SIMO]{single-input multiple-output}
\acrodef{ADCs}[ADCs]{analog-to-digital convertors}
\acrodef{SNR}[SNR]{signal-to-noise ratio}
\acrodef{AWGN}[AWGN]{additive white Gaussian noise}
\acrodef{MRT}[MRT]{maximal ratio transmission}
\acrodef{DFT}[DFT]{Discrete Fourier Transform}
\acrodef{ULA}[ULA]{uniform linear array}
\acrodef{UPA}[UPA]{uniform planar array}
\acrodef{LS}[LS]{least squares}
\acrodef{ALMMSE}[ALMMSE]{approximate linear minimum mean squared error}
\acrodef{QIHT}[QIHT]{quantized iterative hard thresholding}
\acrodef{QIST}[QIST]{quantized iterative soft thresholding}
\acrodef{SVD}[SVD]{singular value decomposition}
\def\Frf{{\mathbf{F}_{\mathrm{RF}}}}
\def\Frfx{{\mathbf{F}^*_{\mathrm{RF}}}}
\def\Fbb{{\mathbf{F}_{\mathrm{BB}}}}
\def\Fbbx{{\mathbf{F}^*_{\mathrm{BB}}}}
\def\Wrfx{{\mathbf{W}^*_{\mathrm{RF}}}}
\def\Wbbx{{\mathbf{W}^*_{\mathrm{BB}}}}
\def\Wrf{{\mathbf{W}_{\mathrm{RF}}}}
\def\Wbb{{\mathbf{W}_{\mathrm{BB}}}}
\newcommand{\sref}[1]{{Section}~\ref{#1}}
\newenvironment{megaalgorithm}[1][htb]{%
	\floatname{algorithm}{Transmission Method}
	\begin{algorithm}[#1]%
	}{\end{algorithm}}
\begin{document}
	%
	
	\title{Hybrid Architectures with Few-Bit ADC Receivers: Achievable Rates and Energy-Rate Tradeoffs}
	\author{\IEEEauthorblockN{Jianhua Mo, Ahmed Alkhateeb, Shadi Abu-Surra, and Robert W. Heath Jr.}\\
	\thanks{ Jianhua Mo, Ahmed Alkhateeb, and Robert W. Heath Jr. are with The University of Texas at Austin (Email: jhmo, aalkhateeb, rheath@utexas.edu).  Shadi Abu-Surra was with Samsung Research America-Dallas, Email: \{shadi.as\}@samsung.com.} \thanks{This work was done in part when the first author interned with Samsung Research America-Dallas. The authors at the University of Texas at Austin are supported in part by the National Science Foundation under Grant No. NSF-CCF-1319556 and No. NSF-CCF-1527079.}
	\thanks{The material in this paper was presented in part at the 20th International ITG Workshop on Smart Antennas in Munich, Germany, March 2016\cite{Mo_Jianhua_WSA16}.}
	}
	
	\maketitle
	\begin{abstract}
	Hybrid analog/digital architectures and receivers with low-resolution analog-to-digital converters (ADCs) are two low power solutions for wireless systems with large antenna arrays, such as millimeter wave and massive MIMO systems. Most prior work represents two extreme cases in which either a small number of RF chains with full-resolution ADCs, or low resolution ADC with a number of RF chains equal to the number of antennas is assumed. In this paper, a generalized hybrid architecture with a small number of RF chains and finite number of ADC bits is proposed. For this architecture, achievable rates with channel inversion and SVD based transmission methods are derived. Results show that the achievable rate is comparable to that obtained by full-precision ADC receivers at low and medium SNRs. A trade-off between the achievable rate and power consumption for different numbers of bits and RF chains is devised. This enables us to draw some conclusions on the number of ADC bits needed to maximize the system energy efficiency.  Numerical simulations show that coarse ADC quantization is optimal under various system configurations. This means that hybrid combining with coarse quantization achieves better energy-rate trade-off compared to both hybrid combining with full-resolutions ADCs and 1-bit ADC combining.
	
	\end{abstract}
	
	
	\newpage
	\section{Introduction}
	Massive \ac{MIMO} is a key feature of next-generation wireless systems. At low-frequencies, massive MIMO supports many users simultaneously and achieves large sum-rates with relatively simple multi-user processing \cite{Marzetta2010,Boccardi2014,Larsson2014}. At mmWave frequencies, the large antenna arrays, deployed at both the base station and mobile users, guarantee sufficient received signal power \cite{Rappaport2014,11ad,Rappaport2013a,Roh2014,Alkhateeb2014d,Heath_JSTSP16}. This allows signal transmission with ultra high data rates thanks to large bandwidths available at the mmWave frequency band \cite{11ad,Boccardi2014,Bai_Tianyang_TWC15}. Unfortunately, the high hardware cost and power consumption of mixed-signal components makes a fully-digital transmission solution, that allocates an RF chain per antenna, difficult to realize in practice \cite{Murmann_16, Hong_Wonbin_COMM14}. To overcome this challenge, new architectures that relax the requirement of associating an RF chain per antenna are being developed \cite{Alkhateeb_COMM14}. Hybrid analog/digital architectures \cite{Ayach_TWC14,Han2015}, and 1-bit ADC receivers \cite{Mo_Jianhua_TSP15} are two potential solutions. Those two solutions, though, represent two extreme cases in terms of the number of bits and RF chains. In this paper, we explore a generalization of these two architectures, where finite resolution ADCs are used with hybrid combining.
	
	
	\subsection{Related work}
	Hybrid analog/digital architectures divide the precoding/combining processing between analog and digital domains. They have been proposed for both mmWave and low-frequency massive MIMO systems \cite{Zhang2005a,Venkateswaran2010, Ayach_TWC14, Alkhateeb2013,Alkhateeb2014,Rusu_ICC15,Chen2015,Bogale2014,Sohrabi2015,Liang2014}. Hybrid architectures employ many fewer radio frequency (RF) chains than the number of antennas, relying  on RF beamforming that is normally implemented using networks of phase shifters \cite{Zhang2005a,Venkateswaran2010, Ayach_TWC14}. Hybrid precoding for diversity and  multiplexing gain was investigated in \cite{Zhang2005a}, and for interference management in \cite{Venkateswaran2010}, considering general MIMO systems. These solutions, however, did not make use of the special large MIMO characteristics in the design. For mmWave massive MIMO systems, the sparse nature of the channels was exploited to design low-complexity hybrid precoding algorithms \cite{Ayach_TWC14}, assuming perfect channel knowledge at the transmitter. Extensions to the case when only partial channel knowledge is required was considered in \cite{Alkhateeb2013,Alkhateeb2014}. Hybrid precoding algorithms that do not rely on channel sparsity were proposed in \cite{Rusu_ICC15,Chen2015}, with the aim of maximizing the system spectral efficiency. Hybrid precoding was also shown to achieve performance near that of the fully-digital solutions in low-frequency massive MIMO systems when the number of RF chains is large enough compared to the number of users \cite{Bogale2014,Sohrabi2015,Liang2014}. A common limitation of the hybrid architectures adopted in \cite{Zhang2005a,Venkateswaran_TSP10,Ayach_TWC14, Alkhateeb2013,Alkhateeb2014,Rusu_ICC15,Chen2015,Bogale2014,Sohrabi2015,Liang2014} is the assumption that the receive RF chains include high-resolution analog-to-digital converters (ADCs), which consume high power, especially at mmWave \cite{Murmann_16}. Another limitation is the extra power consumption of the analog phase shifters, which can have high impact on the energy efficiency of the hybrid combiner. Compared to the conventional fully-digital receiver, the power saved by reducing the number of RF chains in hybrid receiver may be offset by the additional power consumed by the phase shifters.

	An alternative to high resolution ADCs is to live with ultra low resolution ADCs (1-4 bits), which reduces power since ADC power grows exponentially with resolution \cite{Walden_JSAC99,Le_Bin_SPM05}. In \cite{Mezghani_ISIT07,Mezghani_WSA07, Mezghani_ISIT12, Zhang_Wenyi_TCOM12,Mo_Jianhua_TSP15, Mo_Jianhua_Asilomar14, Bai_Qing_ETT15, Orhan_ITA15, Wang_Shengchu_TWC15, Jacobsson_arxiv15}, receiver architectures where the received signal at each antenna is directly quantized by low resolution ADCs without any analog combining is considered.
	At present, the exact capacity of quantized \ac{MIMO} channel is unknown, except for special cases like the \ac{MISO} and \ac{SIMO} channels in the low or high SNR regime \cite{Mezghani_ISIT07, Singh_TCOM09, Mo_Jianhua_TSP15}. Transmitting independent QAM signals \cite{Mezghani_ISIT07} or Gaussian signals \cite{Mezghani_ISIT12, Orhan_ITA15, Bai_Qing_ETT15} from each antenna nearly achieves the capacity at low SNR, but is not optimal at high SNR. The case with CSIT was studied in our previous work \cite{Mo_Jianhua_ITA14, Mo_Jianhua_TSP15} where two methods were proposed to design the input constellation and precoder to maximize the channel capacity. It was shown that the proposed methods achieve much larger rate than QAM signaling, especially at high SNR. There is also interest in using 1-bit ADCs for the massive MIMO receiver where a large number of ADCs are needed\cite{Jacobsson_arxiv15, Wang_Shengchu_TWC15, Choi_TCOM16, Mollen_arxiv16}. The achievable rate of the multiuser uplink massive MIMO channel with 1-bit ADCs was analyzed in \cite{Jacobsson_arxiv15, Mollen_arxiv16}. Symbol detection algorithms in a similar setup were proposed in \cite{Wang_Shengchu_TWC15, Choi_TCOM16}.
	The architecture in \cite{Mezghani_ISIT07,Mezghani_WSA07, Mezghani_ISIT12, Zhang_Wenyi_TCOM12,Mo_Jianhua_TSP15, Mo_Jianhua_Asilomar14, Bai_Qing_ETT15, Orhan_ITA15, Wang_Shengchu_TWC15, Jacobsson_arxiv16, Singh_TCOM09, Mo_Jianhua_ITA14, Choi_TCOM16, Mollen_arxiv16, Studer_TCOM16}, though, assume that the number of RF chains is equal to the number of antennas,  which means that the hardware cost may be high, and no gain is made from possible beamforming processing in the RF domain.
	
	\subsection{Contribution}
	In this paper, we propose a generalized hybrid architecture with few-bit ADC receivers and draw important conclusions about its energy-rate trade-off.
	The hybrid architecture and 1-bit ADC receiver architecture studied in the past represent two extreme points in terms of the number of ADC bits and RF chains. In prior work, the hybrid architecture employs a small number of RF chains but with high resolution ADCs, while the 1-bit ADC receivers assume that the number of RF chains equals the number of antennas. The contributions of this paper are summarized as follows.
	\begin{itemize}
		\item  For the transceiver architecture with hybrid precoding/combining and low resolution ADCs, we propose two transmission methods and derive their achievable rates in closed forms. For the channel inversion based method, the inter-stream interference is canceled before quantization and hence the channel can be separated into several parallel SISO channels. For the SVD based method, the additive quantization noise model is used to derive a lower bound of the achievable rate by assuming Gaussian input distribution.
		We also derive an upper bound of the channel capacity for one-bit quantization. The bound is achieved by the proposed two transmission methods under certain conditions.
		In simulations, we show that the proposed architecture with few-bit ADCs can achieve a performance comparable to that obtained with fully-digital or hybrid architecture with infinite-bit ADC receiver in the low-to-medium SNR range, which is of a special importance for mmWave communications.

		\item We characterize the trade-off between the achievable rate and power consumption in the proposed hybrid architecture with few-bit ADC receivers. This allows us to make important conclusions about the energy efficiency of the considered hybrid architecture for different numbers of bits. This also enables us to explore the performance of the considered architecture compared with the fully-digital transceiver and the conventional hybrid architecture with full-resolution ADCs. Using numerical results and adopting a power consumption model from recent research \cite{Murmann_FTFC13,Murmann_16}, we draw insights into the optimal number of quantization bits from an energy efficiency perspective. A key finding is that coarse quantization (4-5 bits) normally achieves the maximum energy efficiency. The reason is that very low quantization (1-2 bits) suffers from a severe rate loss, while high quantization (7-8 bits) has high power consumption. Hence, both of the two regimes result in very low energy efficiency. 		
	\end{itemize}

	In conclusion, this paper draws a complete picture about the generalized hybrid architectures with few-bit ADC receivers by analyzing both their achievable spectral efficiency and their energy-rate trade-off.

	\emph{Notation} : $a$ is a scalar, $\mb{a}$ is a vector and $\mb{A}$ is a matrix. $\mr{tr}(\mb{A})$, $\mb{A}^*$ and $\|\bA\|_F$ represents the trace, conjugate transpose and Frobenius norm of a matrix $\mb{A}$, respectively. $\bI$ stands for an identity matrix. $I( \ba; \bb )$ represents the mutual information between $\ba$ and $\bb$. $\measuredangle \left(a\right)$ is the phase of the complex number $a$.
	
	\section{System Model} \label{sec:System_Model}
	\begin{figure}[t]
		\begin{centering}
			\includegraphics[width=1\columnwidth]{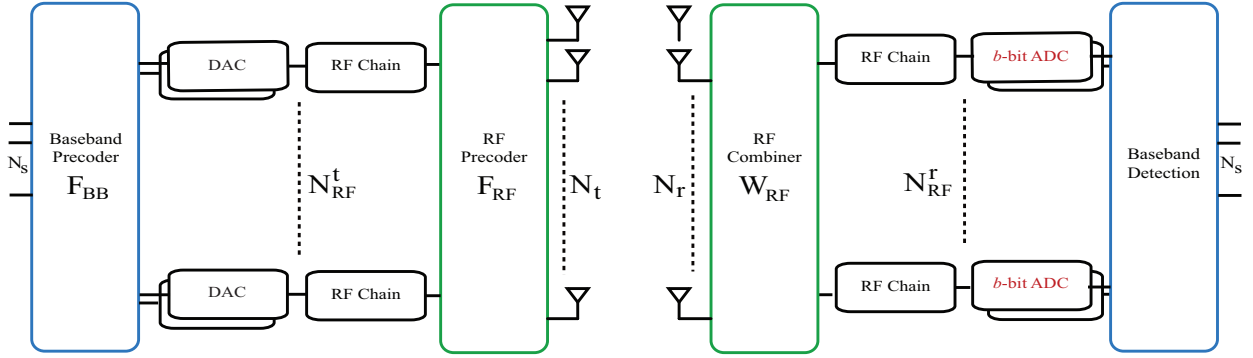}
			\vspace{-0.1cm}
			\centering
			\caption{A MIMO system with hybrid precoding and few-bit ADCs. The transmitter (receiver) has $\Nt$($\Nr$) antennas and $\Nct$($\Ncr$) RF chains. the transmitter has full-precision DACs while the receiver has only few-bit low resolution ADCs.} \label{fig:System_model}
		\end{centering}
		\vspace{-0.3cm}
	\end{figure}
	
	We propose a MIMO architecture that combines hybrid analog/digital precoding and combining with few-bit ADCs, as shown in Fig. \ref{fig:System_model}.
	The transmitter and receiver are equipped with $N_{\mathrm{t}}$ and $N_{\mathrm{r}}$ antennas, respectively. The transmitter is assumed to have $\Nct$  RF chains with full-precision digital-to-analog converters (DACs), while the receiver employs $\Ncr$ RF chains with few-bit (1-4 bits) ADCs. Further, the number of antennas and RF chains are assumed to satisfy $\left( \Nct \leq \Nt, \Ncr \leq \Nr \right)$. The transmitter and receiver communicate via $\Ns$ data streams, with $\Ns \leq \min \left(\Nct, \Ncr\right)$.
	
	Compared to the fully-digital architecture where the receiver has $\Nr$ pairs of high resolution ADCs, the proposed receiver architecture contains only $\Ncr$ pairs of few-bit ADCs, which greatly reduces both the hardware cost and power consumption. Note that the transmitter has high-resolution digital-to-analog converters (DACs) in our model. Analyzing a hybrid transceiver architecture with both low resolution ADCs and DACs is left for future work.
	
	In this paper, we denote $\Frf \in \mathbb{C}^{\Nt \times \Nct}$  as the frequency band analog precoder and $\Fbb \in \mathbb{C}^{\Nct \times \Ns}$ as the baseband digital precoder.
	Assuming a narrowband channel and perfect synchronization, the complex baseband signal prior to combining can be written as
	\begin{align}
	\mb{y} = \mb{H} \Frf \Fbb \mb{s} + \mb{n},
	\end{align}
	where $\bs$ is the digital baseband signal with the covariance $\mathbb{E}[\mb{s} \mb{s}^*]=  \frac{\Pt}{\Ns} \bI$ where $\Pt$ is the transmission power, $\mb{n} \sim \mathcal{CN}(0, \sigma_{\rmN}^2 \mb{I})$ is the white Gaussian noise with variance $\sigma_{\rmN}^2$.

    After the analog combining, quantization and digital combining, the received signal is
    \begin{align}
    \bv = \Wbbx \mathcal{Q}\left( \Wrfx \mb{H} \Frf \Fbb \mb{s} + \Wrfx \mb{n} \right),
    \end{align}
    where $\Wbb \in \mathbb{C}^{\Ncr \times \Ncr}$ is the baseband combiner, $\Wrf \in \mathbb{C}^{\Nr \times \Ncr}$ is the analog combiner, and
	$\mathcal{Q}()$ is a scalar quantization function which applies component-wise and separately to the real and imaginary parts.

	Since this paper focuses on capacity analysis and the choice of baseband combiner $\Wbb$ does not affect the channel capacity as long as $\Wbb$ is invertible, we ignore the baseband combiner in this paper. Further, we assume perfect channel knowledge at the transmitter and receiver. Developing efficient channel estimation techniques for the proposed transceiver architecture is an interesting problem for future work. These techniques may leverage the previously designed channel estimation algorithms for hybrid architectures \cite{Alkhateeb2014,Ghauch2015} and MIMO systems with low-resolution ADCs \cite{Mo_Jianhua_arxiv16b}. We denote the signal without digital combining as
	\begin{align}
	\mb{r} =  \mathcal{Q}\left( \Wrfx \mb{H} \Frf \Fbb \mb{s} + \Wrfx \mb{n} \right),
	\end{align}
	%
	where the effective noise $ \widetilde{\mb{n}} \triangleq \Wrfx \mb{n}$
	has covariance $\Wrfx \Wrf$.

	With known CSI at the transmitter, the capacity of this channel is
	\begin{align}
	C &= \max \limits_{\tiny \begin{array}{c}
			\Fbb, \Frf, \Wrf, \\
			p(\bs),  \mathcal{Q}() \end{array}} I(\bs; \br| \bH) \\
	&= \max \limits_{\tiny \begin{array}{c}
			\Fbb, \Frf, \Wrf, \\
			p(\bs),  \mathcal{Q}() \end{array}}
    \int_{\mathbf{s}} \sum_{\mathbf{r}} p(\mathbf{s}) \mathrm{Pr}(\mathbf{r}|\mathbf{s}; \mathbf{H}) \log_2 \frac{\mathrm{Pr}(\mathbf{r}|\mathbf{s}, \mathbf{H})}{\mathrm{Pr}(\mathbf{r})} \, \mathrm{d} \mathbf{s}
	\end{align}
	where $p(\bs)$ represents the probability distribution of $\bs$, $\mr{Pr}(\br|\bs; \bH)$ is the transition probability between $\bs$ and $\br$ given $\bH$, and $\Pr(\br)=\int_{\bs} p(\mathbf{s}) \mathrm{Pr}(\mathbf{r}|\mathbf{s}, \mathbf{H}) \, \mathrm{d}\mathbf{s}$. Note that the maximization is also over the quantization function $\mathcal{Q}()$, for example, thresholds of the ADCs \cite{Singh_TCOM09,Kamilov_TSP12}. If the simple uniform quantization is assumed, then the stepsize $\Delta$ is the only parameter in quantization function $\mathcal{Q}()$.

	
	
	
	
	\section{Problem Formulation}
	
	Since analog precoding (combining) is implemented by analog phase shifters, the entries of $\Frf$ ($\Wrf$) are limited to have same norm.
	The optimization problem is to maximize the mutual information between $\bs$ and $\br$ as follows.
    \small
	\begin{align}
	\mr{P1:}  \max \limits_{\tiny \begin{array}{c}
			\Fbb, \Frf, \Wrf, \\
			p(\bs),  \mathcal{Q}() \end{array}} & I(\mb{s}; \mb{r}| \bH) \\
	\mr{s.t.} \quad & \Big| [\Frf]_{mn} \Big| = \frac{1}{\sqrt{\Nt}}, \quad \forall m, n, \label{eq:Frf_unit_norm}\\
	      & \Big| [\Wrf]_{mn} \Big| = \frac{1}{\sqrt{\Nr}},  \quad \forall m, n, \label{eq:Wrf_unit_norm} \\
	      & \|\Frf \Fbb\|_F^2 = \Ns,  \label{eq:power_constraint}
	\end{align}
    \normalsize
	where \eqref{eq:power_constraint} is due to the transmission power constraint, i.e., $\mathbb{E} \left[ \|\Frf \Fbb \bs \|^2 \right] = \Pt$.
	
	It is very non-trivial to solve the problem P1. First, it is hard to optimize the mutual information over so many parameters simultaneously. Second, the quantization function $\mathcal{Q}()$ is nonlinear and also related to the input distribution $p(\bs)$ and renders it difficult to analyze the mutual information.
	Third, the equality constraints in \eqref{eq:Frf_unit_norm} and \eqref{eq:Wrf_unit_norm} is non-convex and hard to deal with.
	

	%
	Throughout the paper, the analog precoder $\Frf$ is assumed to satisfy $\Frfx \Frf = \bI$. Under this assumption, the coupled power constraint  \eqref{eq:power_constraint} involving the digital and analog precoding become a simple constraint on the digital precoder $\Fbb$. The similar assumption also appeared in \cite{Rusu_ICC15} where the digital and analog precoders are designed separately.
In addition, we also assume that $\Wrfx \Wrf = \bI $ and therefore the effective noise is still white Gaussian noise. This assumption simplifies the computation of mutual information and a similar idea appeared in \cite{Molisch_CL04}. Further, $\Frfx \Frf$ and $\Wrfx \Wrf$ are approximately to be identity matrices when $\Nt$ and $\Nr$ is large.
	To sum up, we assume both the analog precoder and combininer are semi-unitary matrices.
	
	Consequently, the optimization problem P1 is reformulated as
    \small
	\begin{align}
	\mr{P2:}  \max \limits_{\tiny \begin{array}{c}
			\Fbb, \Frf, \Wrf, \\
			p(\bs),  \mathcal{Q}() \end{array}} & I(\mb{s}; \mb{r}|\bH) \\
	\mr{s.t.} \quad & \Big| [\Frf]_{mn} \Big| = \frac{1}{\sqrt{\Nt}}, \quad \forall m, n, \\
	     & \Big| [\Wrf]_{mn} \Big| = \frac{1}{\sqrt{\Nr}}, \quad \forall m, n, \\
	     & \Frfx \Frf =  \bI, \quad \Wrfx \Wrf =  \bI, \label{eq:semi_unitary} \\
	     & \left\| \Fbb \right\|_F^2 \leq \Ns.
	\end{align}
    \normalsize
	
In this paper, we develop two transmission strategies, including the precoding techniques, the distribution of signal $\bs$, and the quantization design. We will investigate their achievable rates and show that their performance are close to optimum in certain cases.
	
	\section{Upper Bound of the Achievable Rate}
	
	In this section, we provide upper bounds of the achievable rate for one-bit quantization. The upper bounds are used as benchmarks for evaluating our proposed transmission methods.
	For multi-bit quantization, the upper bounds are unknown and left for future work.
	
	
	
	
	\begin{proposition} \label{prop:one_bit_ub}
		An upper bound on the achievable rate with hybrid precoding and one-bit quantization is
		\begin{align} \label{eq:ub_tight}
		R^{\mr{1bit}, \mr{ub}} = 2 \Ncr \left(1 - \Hb \left( Q \left(\sqrt{\frac{\rho \nu^2_{1}}{ \Ncr }}\right) \right) \right),
		\end{align}
		where $\Hb(x) = - x \log_2 x - (1-x) \log_2 (1-x)$ is the binary entropy function, $Q(\cdot)$ is tail probability of the standard normal distribution, $\rho \triangleq \frac{\Pt}{\sigma_{\rmN}^2}$ is the SNR and $\nu_1$ is the maximum singular value of the effective channel matrix $\bG \triangleq \Wrfx \bH \Frf$.
	\end{proposition}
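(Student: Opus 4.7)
The plan is to reduce the vector one-bit channel to $2\Ncr$ parallel scalar binary-output channels and then apply two Jensen-type inequalities. After analog combining and quantization, the output $\br$ consists of $2\Ncr$ binary symbols, namely the sign-detected real and imaginary parts of $\bG\Fbb\bs+\widetilde{\bn}$, where $\bG=\Wrfx\bH\Frf$ and $\widetilde{\bn}=\Wrfx\bn$. The semi-unitarity assumption $\Wrfx\Wrf=\bI$ keeps the effective noise white, so the $2\Ncr$ real noise components are i.i.d.\ $\mathcal{N}(0,\sigma_{\rmN}^2/2)$, and conditionally on $\bs$ the bits $y_1,\ldots,y_{2\Ncr}$ are mutually independent. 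This gives
\begin{align*}
I(\bs;\br\mid\bH) \;=\; H(\br) - \sum_{j=1}^{2\Ncr} H(y_j\mid\bs) \;\le\; 2\Ncr - \sum_{j=1}^{2\Ncr} H(y_j\mid\bs),
\end{align*}
since each $y_j$ is binary.

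Next I would write each conditional entropy in closed form. Letting $\mu_j(\bs)$ denote the pre-quantization mean of the $j$-th real dimension, $y_j=\mathrm{sign}(\mu_j(\bs)+n'_j)$ with $n'_j\sim\mathcal{N}(0,\sigma_{\rmN}^2/2)$, so $H(y_j\mid\bs=\bs_0)=\Hb(Q(|\mu_j(\bs_0)|\sqrt{2}/\sigma_{\rmN}))$, using $Q(-x)=1-Q(x)$ and $\Hb(p)=\Hb(1-p)$. Writing $g(x)\triangleq\Hb(Q(\sqrt{x}))$, this becomes $H(y_j\mid\bs)=\mathbb{E}_{\bs}[g(2\mu_j^2(\bs)/\sigma_{\rmN}^2)]$.

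The core of the argument is a two-step Jensen application that relies on two properties of $g$: it is \emph{convex} on $[0,\infty)$ (equivalently, the BI-AWGN capacity $1-g(\mathrm{snr})$ is concave in its SNR argument) and it is monotonically decreasing. Convexity plus Jensen in the inner expectation gives $H(y_j\mid\bs)\ge g(2\mathbb{E}[\mu_j^2]/\sigma_{\rmN}^2)$, and a second application across the $2\Ncr$ indices yields
\begin{align*}
\sum_{j=1}^{2\Ncr} H(y_j\mid\bs) \;\ge\; 2\Ncr\, g\!\left(\frac{1}{\Ncr\sigma_{\rmN}^2}\sum_{j=1}^{2\Ncr}\mathbb{E}[\mu_j^2]\right).
\end{align*}
The aggregate signal power is $\sum_j\mathbb{E}[\mu_j^2]=\mathbb{E}[\|\bG\Fbb\bs\|^2]=(\Pt/\Ns)\|\bG\Fbb\|_F^2\le(\Pt/\Ns)\nu_1^2\|\Fbb\|_F^2\le\Pt\nu_1^2$, using $\|\Fbb\|_F^2\le\Ns$ from the transmit power constraint and $\|\bG\|_2=\nu_1$. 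Hence the argument of $g$ is at most $\rho\nu_1^2/\Ncr$, and monotonicity of $g$ delivers $\sum_j H(y_j\mid\bs)\ge 2\Ncr\,\Hb(Q(\sqrt{\rho\nu_1^2/\Ncr}))$. Plugging back gives \eqref{eq:ub_tight}; the bound holds for every admissible $\Fbb$, $p(\bs)$ and quantizer, so it dominates the supremum.

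The one non-routine ingredient is the convexity of $g(x)=\Hb(Q(\sqrt{x}))$ on $[0,\infty)$. I would either verify it by a direct second-derivative computation (using $Q'(u)=-\phi(u)$ and $\Hb''(p)=-1/[p(1-p)\ln 2]$, showing the resulting expression is nonnegative) or invoke the well-known concavity of the BI-AWGN capacity in SNR. Everything else in the argument is routine: entropy decomposition, the spectral-norm bound $\|\bG\Fbb\|_F^2\le\nu_1^2\|\Fbb\|_F^2$, and the conditional independence of the binary outputs provided by $\Wrfx\Wrf=\bI$.
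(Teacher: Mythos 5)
Your proof is correct and follows essentially the same route as the paper's: bound $\cH(\br)$ by $2\Ncr$, express the conditional entropy as a sum of per-coordinate terms $\Hb\bigl(Q\bigl(\sqrt{2\mu_j^2/\sigma_{\rmN}^2}\bigr)\bigr)$ using the whiteness of $\Wrfx\bn$, then apply Jensen twice via the convexity and monotonicity of $\Hb(Q(\sqrt{x}))$ together with the spectral-norm bound on the aggregate signal power. The only cosmetic difference is that you keep $\Fbb$ explicit where the paper absorbs it into $\bx=\Fbb\bs$, and the paper cites Dabeer--Madhow for the convexity of $\Hb(Q(\sqrt{x}))$ rather than verifying it directly.
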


	\begin{proof} Please see the appendix.
	\end{proof}
    The upper bound is achieved when the effective channel $\bG$ is full rank and has $\Ncr$ identical singular values, or equivalently, $\bG \bG^*= \nu_1^2 \bI$.
    	
	At low SNR, this upper bound is approximated as
	\begin{align} \label{eq:R_1bit_ub_low_SNR}
	R^{\mr{1bit}, \mr{ub}} = \frac{2}{\pi} \frac{\rho \nu_1^2}{\ln 2} + o(\rho),
	\end{align}
	following the facts $Q(t) = \frac{1}{2} - \frac{1}{\sqrt{2 \pi}}t + o(t^2)$ and $\Hb(\frac{1}{2}+t) = 1 - \frac{2}{\ln 2} t^2 + o(t^2)$. Therefore the bound increases linearly with the power at low SNR. But at high SNR, the upper bound converges to $2 \Ncr$ bps/Hz, which is due to the finite number of quantization output bits.

	Note that the upper bound given in \eqref{eq:ub_tight} is related to the choice of analog precoding $\Wrf$ and $\Frf$.
	Next, we give another bound, which is looser but independent of the analog precoding.
	\begin{corollary}
		An upper bound of the achievable rate with hybrid precoding and one-bit quantization is
		\begin{align} \label{eq:ub_loose}
		\widetilde{R}^{\mr{1bit}, \mr{ub}} = 2 \Ncr \left(1 - \Hb \left( Q \left(\sqrt{\frac{\rho {\sigma}^2_{1}}{ \Ncr }}\right) \right) \right),
		\end{align}
		where $\sigma_1$ is the maximum singular value of $\bH$.
	\end{corollary}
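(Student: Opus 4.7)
The plan is to derive this bound as a straightforward relaxation of Proposition~\ref{prop:one_bit_ub}, by decoupling the upper bound from the choice of analog precoder/combiner. The only quantity in \eqref{eq:ub_tight} that depends on $\Wrf$ and $\Frf$ is $\nu_1$, the largest singular value of the effective channel $\bG = \Wrfx \bH \Frf$. So the strategy is to upper-bound $\nu_1$ by $\sigma_1$ (the largest singular value of $\bH$) using the semi-unitary structure imposed in problem P2, and then invoke monotonicity to push this inequality through the $\Hb \circ Q$ composition.

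First I would show $\nu_1 \leq \sigma_1$. Under the constraints \eqref{eq:semi_unitary}, both $\Wrf$ and $\Frf$ have orthonormal columns, so their spectral norms equal one. Submultiplicativity of the spectral norm then yields
\begin{align}
\nu_1 = \|\Wrfx \bH \Frf\|_2 \leq \|\Wrfx\|_2 \, \|\bH\|_2 \, \|\Frf\|_2 = 1 \cdot \sigma_1 \cdot 1 = \sigma_1.
\end{align}

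Next I would verify that the function $f(x) = 2\Ncr \left(1 - \Hb\!\left(Q\!\left(\sqrt{x/\Ncr}\right)\right)\right)$ is monotonically increasing in $x \geq 0$. The argument proceeds by composition: $Q(\cdot)$ is strictly decreasing, and its range on $[0,\infty)$ lies in $(0, 1/2]$; on this interval the binary entropy $\Hb(\cdot)$ is strictly increasing toward its maximum at $1/2$. Therefore $\Hb(Q(\sqrt{x/\Ncr}))$ is decreasing in $x$, and $1-\Hb(Q(\sqrt{x/\Ncr}))$ is increasing in $x$. Applying this monotonicity to $x = \rho \nu_1^2 \leq \rho \sigma_1^2$ gives
\begin{align}
R^{\mr{1bit}, \mr{ub}} \;=\; 2\Ncr\!\left(1 - \Hb\!\left(Q\!\left(\sqrt{\tfrac{\rho \nu_1^2}{\Ncr}}\right)\right)\right) \;\leq\; 2\Ncr\!\left(1 - \Hb\!\left(Q\!\left(\sqrt{\tfrac{\rho \sigma_1^2}{\Ncr}}\right)\right)\right) \;=\; \widetilde{R}^{\mr{1bit}, \mr{ub}},
\end{align}
which, combined with Proposition~\ref{prop:one_bit_ub}, establishes the claim.

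There is no real obstacle here; the proof is essentially a two-line observation. The only mildly subtle point is making the monotonicity of $f$ explicit, since $\Hb$ is non-monotonic globally (it peaks at $1/2$), and one must therefore note that $Q(\sqrt{x/\Ncr}) \in (0, 1/2]$ for all $x \geq 0$ so that we stay on the increasing branch of $\Hb$. Since $\widetilde{R}^{\mr{1bit}, \mr{ub}}$ depends on $\bH$ only through $\sigma_1$ and is independent of the particular choice of analog precoder and combiner, it serves as a channel-dependent but precoder-independent benchmark, as desired.
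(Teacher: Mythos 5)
Your proof is correct and follows the same overall route as the paper: bound the largest singular value $\nu_1$ of the effective channel $\bG = \Wrfx \bH \Frf$ by $\sigma_1$, and then push the inequality $\rho\nu_1^2 \le \rho\sigma_1^2$ through the monotone map $x \mapsto 1 - \Hb\left(Q\left(\sqrt{x/\Ncr}\right)\right)$. The only place you diverge is in how the key inequality $\nu_1 \le \sigma_1$ is established: the paper simply cites \cite[Theorem 2.2]{Rao_Math79} for $\nu_1^2 \le \sigma_1^2$ under the semi-unitary constraints, whereas you derive it in one line from submultiplicativity of the spectral norm together with $\|\Wrfx\|_2 = \|\Frf\|_2 = 1$, which holds because both matrices have orthonormal columns. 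Your version is self-contained and, for a reader, arguably preferable to the external citation. You also make explicit the monotonicity argument that the paper leaves implicit, correctly handling the one subtle point: $\Hb$ is not globally monotone, but $Q$ maps $[0,\infty)$ into $(0,1/2]$, where $\Hb$ is increasing, so the composition $\Hb \circ Q$ is decreasing on the relevant domain. No gaps.
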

	\begin{IEEEproof}
		Under the constraint $\Wrfx \Wrf =  \bI $ and $\Frfx \Frf =  \bI$, it is proved that $\nu_1^2 \leq {\sigma}_1^2$ in \cite[Theorem 2.2]{Rao_Math79}.
		Therefore, we have
		\begin{align}
		R^{\mr{1bit}, \mr{ub}}  \leq  2 \Ncr \left(1 - \Hb \left( Q \left(\sqrt{\frac{\rho {\sigma}^2_{1}}{\Ncr}}\right) \right) \right) \triangleq \widetilde{R}^{\mr{1bit}, \mr{ub}}.
		\end{align}
		This completes the proof of Corollary 1.
	\end{IEEEproof}
	\begin{remark}
		For a channel with infinite-bit ADCs, a simple upper bound of the capacity is $\Ncr \log_2 \left(1 + \frac{\rho \nu_1^2}{\Ncr}\right)$, which is achieved when the effective channel $\bG$ has same singular values.
		Compared to \eqref{eq:ub_tight} and \eqref{eq:ub_loose}, the bound for infinite-bit ADCs increases to infinity as the power increases to infinity.
	\end{remark}
	

	\section{Achievable Rate with Channel Inversion Based Transmission}
	In this section, we propose channel inversion based transmission. In this transmission method, there is no interference among data streams at the receiver and each stream is quantized separately.
Therefore, the exact achievable rate of this method can be found in closed-form.
	
	\subsection{Channel Inversion Based Precoding Algorithm}
	For digital precoding design, we propose to use channel inversion precoding assuming that $\Nct \geq \Ncr = \Ns$.
	The digital precoder is
	\begin{align} \label{eq:Fbb_CI}
	\Fbb = \sqrt{\frac{\Ns}{\beta}}
	\mb{G}^*  \left( \mb{G} \mb{G}^* \right)^{-1}
	\end{align}
	where
	\begin{align}
	\beta = \mr{tr}\left\{ {\bG}^* \left( {\bG} {\bG}^* \right)^{-2} {\bG} \Frfx \Frf \right\}
	\end{align}
	such that the power constraint \eqref{eq:power_constraint} is satisfied.
	As it is assumed that $\Frfx \Frf = \bI$, $\beta$ is simplified to be
	\begin{align}
	\beta =  \mr{tr}\left\{\left( \mb{G} \mb{G}^* \right)^{-1}   \right\}.
	\end{align}
	
	Since there is no interference among streams because of channel inversion precoding, each stream of data can be detected separately.
	The received signal is
	\begin{align}
	\mb{r} &= \mathcal{Q} \left( \Wrfx \mb{H} \Frf \Fbb \mb{s} +  \Wrfx \mb{n} \right) \\
	&= \mathcal{Q} \left( \sqrt{\frac{\Ns}{\mr{tr}\left\{\left( \mb{G} \mb{G}^* \right)^{-1}   \right\}}} \mb{s} +  \Wrfx \mb{n} \right).
	\end{align}
	
	The channel is converted to $2 \Ns$ parallel sub channels, each of which is a quantized real-valued \ac{SISO} channel. The SNR of each sub-channel is given by
	\begin{align} \label{eq:SNR}
	\mr{SNR_{CI}}= \frac{\rho}{\mr{tr}\left\{ \left( \mb{G} \mb{G}^* \right)^{-1} \right\} },
	\end{align}
	where $\rho \triangleq \frac{\Pt}{\sigma_{\rmN}^2}$.
	
	Maximizing the SNR is equivalent to maximizing the following term
	\begin{align}
	\eta \left( \bG \right) & \triangleq \left(\mr{tr}\left\{ \left( \mb{G} \mb{G}^* \right)^{-1} \right\} \right)^{-1} \\
	&= \left(\frac{1}{\nu_1^2} + \frac{1}{\nu_2^2} + \cdots + \frac{1}{\nu_{\Ns}^2} \right)^{-1},
	\end{align}
	where $\nu_1$, $\nu_2$, $\cdots$, $\nu_{\Ns}$ are the singular values of the effective channel $\mb{G}$ in descending order.
	Therefore, $\Wrf$ and $\Frf$ should be chosen to maximize the harmonic mean of the squared singular values of $\mb{G}$, or equivalently the harmonic mean of the eigenvalues of $ \mb{G} \mb{G}^*$.
	
	To maximize $\eta \left(\bG \right)$, the optimal choice of $\Wrf$ and $\Frf$ are the singular vectors associated with the largest $\Ns$ singular values of $\bH$  \cite{Palomar_TSP03}. Although such choice satisfies the semi-unitary constraints \eqref{eq:semi_unitary}, the norm constraints \eqref{eq:Frf_unit_norm}-\eqref{eq:Wrf_unit_norm} are violated.
	
	In this paper, we use alternating projection algorithm \cite{Tropp_IT05} to find an approximate solution satisfying both the constant-norm and semi-unitary constraints. The algorithm is summarized in Algorithm \ref{alg:AltProj}. In steps 2a)-2b), the semi-unitary matrices $\widehat{\bW}_{\mr{RF}}$ and $\widehat{\bF}_{\mr{RF}}$ are projected to the sets of matrices satisfying the norm constraints \eqref{eq:Frf_unit_norm}-\eqref{eq:Wrf_unit_norm}, resulting in the solutions
	$\widetilde{\bW}_{\mr{RF}}$ and $\widetilde{\bF}_{\mr{RF}}$ respectively. Each element of $\widetilde{\bW}_{\mr{RF}}$ $\left(\widetilde{\bF}_{\mr{RF}}\right)$ has the same phase of the corresponding element in $\widehat{\bW}_{\mr{RF}}$ $\left(\widehat{\bF}_{\mr{RF}} \right)$ but normalized amplitude.
	In steps 2c)-2d), $\widetilde{\bW}_{\mr{RF}}$ and $\widetilde{\bF}_{\mr{RF}}$ are projected back to the sets of semi-unitary matrices. The projection process continues until $\frac{\left\|\widehat{\mathbf{F}}^{(k)}_{\mathrm{RF}} - \widetilde{\mathbf{F}}^{(k)}_{\mathrm{RF}} \right\|_F}{\sqrt{\Nct}} \left(\frac{\left\|\widehat{\mathbf{W}}^{(k)}_{\mathrm{RF}} - \widetilde{\mathbf{W}}^{(k)}_{\mathrm{RF}} \right\|_F}{\sqrt{\Ncr}}\right)$ is smaller than a specified threshold $\epsilon$. The convergence of the alternating projection algorithm is discussed in details in \cite{Tropp_IT05}. In \figref{fig:AltProj_Convergence}, we show a typical convergence behaviour when the transmitter is assumed to have  64 antennas and 8 RF chains, while the receiver employs 8 antennas and 4 RF chains. Therefore, $\Frf \in \mathbb{C}^{64\times 8}$ and $\Wrf \in \mathbb{C}^{8 \times 4}$. Note that $\left\|\widetilde{\bF}_{\mr{RF}}\right\|_F^2 = \left\|\widehat{\bF}_{\mr{RF}}\right\|_F^2 = \Nct$ and $\left\|\widetilde{\bW}_{\mr{RF}}\right\|_F^2 = \left\|\widehat{\bW}_{\mr{RF}}\right\|_F^2 = \Ncr$. So the distance is normalized by $\sqrt{\Nct}$ and $\sqrt{\Ncr}$, respectively. It is seen the algorithm converges very fast, within less than 100 iterations, to a normalized distance of less than $10^{-5}$.

Another choice for designing analog precoder is to assume that $\Frf$ and $\Wrf$ consist of columns from the DFT matrices\cite{Mo_Jianhua_WSA16}. This is inspired by the virtual channel representation \cite{Sayeed_TSP02}. Note that the DFT matrix has constant-norm entries and orthogonal columns, therefore the norm and semi-unitary constraints of analog precoder are both satisfied. However, searching the best combination of columns has higher complexity than the alternating projection method when the number of antennas is large.

	\begin{algorithm}[t]
		\caption{Alternating projection algorithm for analog precoding design}
		\label{alg:AltProj}
		\begin{enumerate}
			\item Initialize $\widehat{\bW}^{(0)}_{\mr{RF}} = \bU$ and $\widehat{\bF}^{(0)}_{\mr{RF}} = \bV$ where $\bH = \bU \mathbf{\Sigma} \bV^*$ is the singular value decomposition of $\bH$. Set $k=1$ and $\epsilon=10^{-5}$.
			\item Alternating projection method
				\begin{enumerate}
					\item $ \left[\widetilde{\bW}^{(k)}_{\mr{RF}} \right]_{mn} = \frac{1}{\sqrt{\Nr}} \exp \left(\j \measuredangle \left( \left[\widehat{\bW}^{(k-1)}_{\mr{RF}}\right]_{mn} \right) \right)$,
					$\forall m, n,$
					\item $ \left[\widetilde{\bF}^{(k)}_{\mr{RF}} \right]_{mn} = \frac{1}{\sqrt{\Nt}} \exp \left(\j \measuredangle \left( \left[\widehat{\bF}^{(k-1)}_{\mr{RF}}\right]_{mn} \right) \right)$,
					$\forall m, n,$
					\item $\widehat{\bW}^{(k)}_{\mr{RF}} = \widetilde{\bW}^{(k)}_{\mr{RF}} \left( \left(\widetilde{\bW}^{(k)}_{\mr{RF}}\right)^{*} \widetilde{\bW}^{(k)}_{\mr{RF}} \right)^{-\frac{1}{2}}$,
					\item $\widehat{\bF}^{(k)}_{\mr{RF}} = \widetilde{\bF}^{(k)}_{\mr{RF}} \left( \left(\widetilde{\bF}^{(k)}_{\mr{RF}}\right)^{*} \widetilde{\bF}^{(k)}_{\mr{RF}} \right)^{-\frac{1}{2}}$,
                    \item If the normalized distance $\frac{\left\|\widehat{\mathbf{W}}^{(k)}_{\mathrm{RF}} - \widetilde{\mathbf{W}}^{(k)}_{\mathrm{RF}} \right\|_F}{\sqrt{\Ncr}} < \epsilon$ and $\frac{\left\|\widehat{\mathbf{F}}^{(k)}_{\mathrm{RF}} - \widetilde{\mathbf{F}}^{(k)}_{\mathrm{RF}} \right\|_F}{\sqrt{\Nct}} < \epsilon$, return $\widehat{\bW}^{(k)}_{\mr{RF}}$ and $\widehat{\bF}^{(k)}_{\mr{RF}}$; \\
                        else, $k=k+1$ and go back to step (a).
				\end{enumerate}
		\end{enumerate}
	\end{algorithm}

	\begin{figure}[t]
		\begin{centering}
			\includegraphics[width=.7\columnwidth]{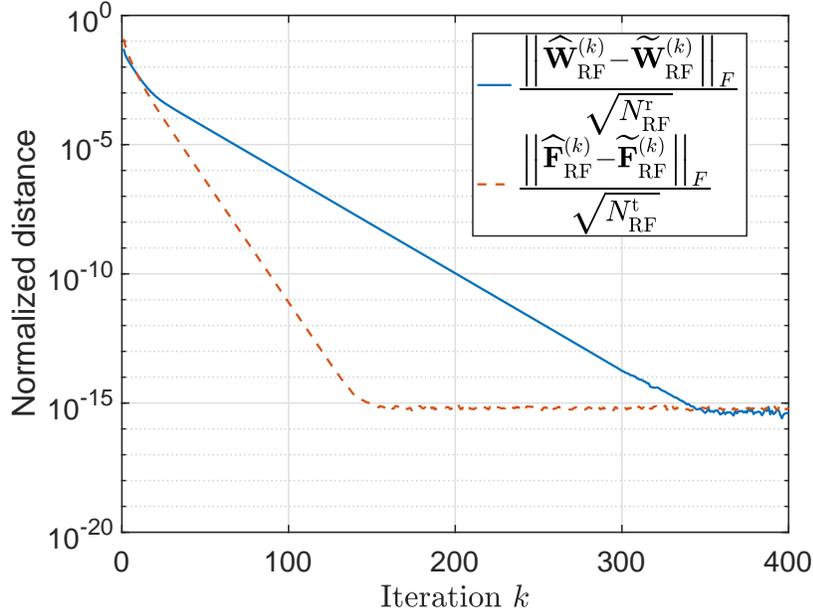}
			\vspace{-0.1cm}
			\centering
			\caption{This figure shows a typical convergence behaviour of the alternating projection method. The normalized error decreases exponentially with the iteration number $k$. In the figure, $\Frf \in \mathbb{C}^{64\times 8}$ and $\Wrf \in \mathbb{C}^{8 \times 4}$. Note that the floor of the normalized distance around $10^{-15}$ is due to the precision limitation of our computer. By default, MATLAB uses 16 digits of precision.}
			\label{fig:AltProj_Convergence}
		\end{centering}
		\vspace{-0.3cm}
	\end{figure}

	\subsection{Rate Analysis with One-Bit Quantization}
	In this subsection, we focus on the special case of one-bit quantization and derive the gap between the achievable rate and the upper bound given in Proposition \ref{prop:one_bit_ub}.
	
	If one-bit ADCs are used at the receiver, the capacity of each real-valued \ac{SISO} sub-channel is achieved by binary antipodal signaling and is given by \cite{Singh_TCOM09}
	\begin{align}
	1 - \Hb \left( Q \left(\sqrt{ \mr{SNR_{CI}} }\right) \right).
	\end{align}
	The total sum rate therefore is
	\begin{align} \label{eq:R_ci_1bit}
	R_{\mr{CI}}^{\mr{1bit}} = 2 \Ns \left(1 - \Hb \left( Q \left(\sqrt{ \mr{SNR_{CI}} }\right) \right) \right).
	\end{align}
	Next, noting that
	\begin{align}
	\left(\frac{1}{\nu_1^2} + \frac{1}{\nu_2^2} + \cdots + \frac{1}{\nu_{\Ns}^2} \right)^{-1}
	\geq  \frac{\nu_{\Ns}^2}{\Ns},
	\end{align}
	a lower bound of the SNR of the proposed precoding design is
	\begin{align} \label{eq:SNR_lb}
	\mr{SNR_{CI}} \geq \frac{\rho \nu^2_{\Ns}}{ \Ns } \triangleq \mr{SNR_{lb}}.
	\end{align}
	
	%
	
	Based on the SNR lower bound in \eqref{eq:SNR_lb}, a lower bound of the achievable rate is
	\begin{align}
	R_{\mr{CI}}^{\mr{1bit}, \mr{lb} } & =  2 \Ns \left(1 - \Hb \left( Q \left(\sqrt{\frac{\rho \nu^2_{\Ns}}{ \Ns }}\right) \right) \right) \label{eq:R_ci_1bit_lb} \\
	&= 2 \Ns \left(1 - \Hb \left( Q \left(\sqrt{\frac{\rho \nu^2_{1}}{ \Ns }  \frac{\nu^2_{\Ns}}{\nu^2_{1}}}\right) \right) \right).  \label{eq:R_ci_1bit_compare}
	\end{align}
	
	Comparing \eqref{eq:R_ci_1bit_compare} and \eqref{eq:ub_tight}, we find that the power gap between $R^{\mr{1bit}, \mr{ub}}$ and $R_{\mr{CI}}^{\mr{1bit}, \mr{lb}}$ is $10 \log_{10} \frac{\nu_1^2}{\nu_{\Ns}^2}$ dB. Therefore, we conclude that compared to optimal digital precoding (which is unknown), the power loss of the channel inversion precoding is at most $10 \log_{10} \frac{\nu_1^2}{\nu_{\Ns}^2}$ dB. It also implies that for a well-conditioned effective channel, the power loss is small.
	
	Furthermore, if there is only one RF chain at the receiver, i.e., $\Ncr=\Ns=1$, the achievable rate in  \eqref{eq:R_ci_1bit} and the upper bound in \eqref{eq:ub_tight} are exactly same and it implies that the channel capacity is achieved by the proposed transmission method.

    At last, if there are more RF chains at the receiver than that at the transmitter, i.e., $\Ncr > \Nct$, then only $\Nct$ (or less than $\Nct$) out of $\Ncr$ receive RF chains are used, otherwise $\bG \bG^*$ in \eqref{eq:Fbb_CI} does not have the inverse. The power consumption also decreases by turning off some receive RF chains.
	
	\subsection{Rate Analysis with Few-Bit Quantization}
	For a real-valued SISO channel with $b$-bit quantizer, a benchmark design was proposed in \cite{Singh_TCOM09} where input signal are equiprobable, equispaced $2^b$-PAM (pulse amplitude modulated), and quantizer thresholds are chosen to be the mid-points of the input mass point locations. Although this combination of input and quantization is suboptimal, it is shown in \cite{Singh_TCOM09} to be close to the optimum which is obtained by high-complexity iterative algorithm. In addition, this combination is actually optimal for one-bit quantization.
	Therefore, in the proposed method, we assume the simple $2^{2b}$-QAM signaling at the transmitter and uniform quantization at the receiver.

    The channel inversion based transmission, including the analog and digital precoding design, signaling and quantization, is summarized in Transmission Method \ref{alg:CI}.

	\begin{megaalgorithm}[h]                      
		\caption{Channel Inversion Based Transmission Method}          
		\label{alg:CI}                           
		\begin{enumerate}
			\item  \textbf{Analog precoding design}: Find the approximate solution $\overline{\bW}_{\mr{RF}}$ and $\overline{\bF}_{\mr{RF}}$ by the alternating projection shown in Algorithm \ref{alg:AltProj}.
			
			\item \textbf{Digital precoding design}:
			\begin{enumerate}
				\item Compute the effective channel $\overline{\bG} \triangleq \overline{\bW}^*_{\mr{RF}} \bH \overline{\bF}_{\mr{RF}}$.
				\item Set the digital precoder $\Fbb$ as
				\begin{align*}
				\Fbb = \sqrt{\frac{\Ns}{ \mr{tr}\left\{  \left( \overline{\bG} \, \overline{\bG}^* \right)^{-1}  \right\}}}
				\overline{\bG}^*  \left( \overline{\bG} \, \overline{\bG}^* \right)^{-1}.
				\end{align*}
			\end{enumerate}
			\item \textbf{Signaling}: $\bs$ is chosen to be $2^{2b}$-QAM symbols.
			\item \textbf{Quantization}: Uniform quantization.
    	\end{enumerate}
	\end{megaalgorithm}
	
	We next show an example of two-bit quantization. The derivation with multi-bit quantization is similar.
	The set of input signals is $\mathcal{S} = \left\{-\frac{3 \Delta}{2}, -\frac{ \Delta}{2}, \frac{ \Delta}{2}, \frac{3\Delta}{2} \right\}$ where $\Delta$ is the stepsize.
	The transition probability matrix is
	\begin{align}
	\Pr(r|s) = &
	\left[
	\begin{array}{cccc}
	\Phi(\frac{\Delta}{2 \xi}) & \Phi(\frac{3\Delta}{2 \xi}) - \Phi(\frac{\Delta}{2 \xi}) & \Phi(\frac{5\Delta}{2 \xi}) - \Phi(\frac{3\Delta}{2 \xi}) & 1 - \Phi(\frac{5\Delta}{2 \xi}) \\
	\Phi(\frac{-\Delta}{2 \xi}) & \Phi(\frac{\Delta}{2 \xi}) - \Phi(\frac{-\Delta}{2 \xi}) & \Phi(\frac{3\Delta}{2 \xi}) - \Phi(\frac{\Delta}{2 \xi}) & 1 - \Phi(\frac{3\Delta}{2 \xi}) \\
	\Phi(\frac{-3\Delta}{2 \xi}) & \Phi(\frac{-\Delta}{2 \xi}) - \Phi(\frac{-3\Delta}{2 \xi}) & \Phi(\frac{\Delta}{2 \xi}) - \Phi(\frac{-\Delta}{2 \xi}) & 1 - \Phi(\frac{\Delta}{2 \xi}) \\
	\Phi(\frac{-5\Delta}{2 \xi}) & \Phi(\frac{-3\Delta}{2 \xi}) - \Phi(\frac{-5\Delta}{2 \xi}) & \Phi(\frac{-\Delta}{2 \xi}) - \Phi(\frac{-3\Delta}{2 \xi}) & 1 - \Phi(\frac{-\Delta}{2 \xi}) \\
	\end{array}
	\right]
	\begin{array}{l}
	s = - \frac{3 \Delta}{2}\\
	s = - \frac{\Delta}{2}\\
	s = \frac{\Delta}{2} \\
	s = \frac{3 \Delta}{2}
	\end{array} \\
	& \begin{array}{c}
	r = - \frac{3 \Delta}{2} \quad \quad \quad \;
	r = - \frac{\Delta}{2} \quad \quad \quad \;
	r = \frac{\Delta}{2}  \quad \quad \quad \quad \quad
	r = \frac{3 \Delta}{2} \quad \quad \quad
	\end{array}  \nonumber
	\end{align}
	where $\xi^2$ denotes the noise variance and $\Phi(\cdot)$ is the cumulative distribution function of the standard normal distribution. The transition probability matrix of higher resolution ADCs could be obtained similarly.

	The SNR of each sub-channel must be equal to the value given in \eqref{eq:SNR}. Therefore,
	\begin{align}
	\frac{1}{ 2^{b+1} } \left( \Delta^2 + (3\Delta)^2 + \cdots +\left( \left( 2^b-1 \right) \Delta \right)^2 \right)/ \xi^2 &= \mr{SNR_{CI}} \\
	\frac{1}{2^{b+1}} \frac{1}{3} 2^{b-1} (2^{2b} - 1) \frac{\Delta^2}{\xi^2} &= \mr{SNR_{CI}} \label{eq:summation} \\
	\frac{\Delta}{\xi} &= \sqrt{ \frac{12 \, \mr{SNR_{CI}} }{ 2^{2b} - 1}}
	\end{align}
	where \eqref{eq:summation} is from the fact that $1^2 + 3^2 + \cdots + (2n-1)^2 = \frac{1}{3} n (4 n^2-1)$.
	
	The achievable rate can therefore be computed as
	\begin{align}
	R_{\mr{CI}}^{ \mr{b \, bit} } & = 2 \Ns \sum_{s} \sum_{r} \Pr(s) \Pr(r|s) \log \frac{\Pr(r|s)} { \Pr(r)} \\
	&= 2 \Ns \sum_{s} \sum_{r} \Pr(s) \Pr(r|s) \log \frac{\Pr(r|s)} { \sum_{s'} \Pr(s') \Pr(r|s')} \\
	&\stackrel{(a)}{\leq} 2 \Ns \sum_{s} \sum_{r} \frac{1}{2^b} \Pr(r|s) \log \frac{\Pr(r|s)} { \frac{1}{2^b} \sum_{s'} \Pr(r|s')} \\
	&= 2 \Ns \left(b + \frac{1}{2^b} \sum_{s} \sum_{r}  \Pr(r|s) \log \frac{\Pr(r|s)} { \sum_{s'} \Pr(r|s')} \right), \label{eq:R_ci_bbit}
	\end{align}
	where (a) follow from that the equal transmission probability $\Pr(s)=\frac{1}{2^b}$.
	The achievable rate in \eqref{eq:R_ci_bbit} is complicated and does not provide us with enough intuition. A simple lower bound of \eqref{eq:R_ci_bbit} can be found by Fano's inequality \cite[Section 2.10]{Cover_Book12}. The conditional entropy is upper bounded by
	\begin{align}
	\Hb (s|r) \leq \Hb (P_e)+ P_e \log (|\mathcal{S}|-1) ,
	\end{align}
	where $P_e$ is the error probability.
	Hence, the mutual information between $s$ and $r$ is
	\begin{align}
	I(s;r) &= \Hb (s) - \Hb (s|r) \\
	&\geq b - \Hb (P_e) - P_e \log \left(2^b - 1 \right).
	\end{align}
	A lower bound of the sum rate of $2\Ns$ sub-channels therefore is
	\begin{align} \label{eq:R_ci_lb_bbit}
	 R_{\mr{CI}}^{\mr{b\, bit}, \mr{lb}} =  2 \Ns \left( b - \Hb (P_e) - P_e \log \left(2^b - 1 \right) \right),
	\end{align}
    where the error probability $P_e$ for $2^b$-PAM signaling is \cite{Proakis_Book08}
	\begin{align}
	P_e &= 2 \left(1 - \frac{1}{2^b} \right) Q \left( \frac{\Delta}{2 \xi} \right) \\
	&= 2 \left(1 - \frac{1}{2^b} \right) Q \left( \sqrt{ \frac{3 \, \mr{SNR}_{\mr{CI}} }{ 2^{2b} - 1 } } \right).
	\end{align}
	
	From \eqref{eq:R_ci_lb_bbit}, we find that the as $\mr{SNR}_{\mr{CI}}$ increases, $P_e$ decreases to zero and $R_{\mr{CI}}^{\mr{b\, bit}, \mr{lb}}$ converges to $2 \Ns b$ bps/Hz.
	%
	In addition, note that for the one-bit case, $P_e = Q \left(\sqrt{ \mr{SNR_{CI}} }\right)$ and therefore \eqref{eq:R_ci_lb_bbit} degrades to \eqref{eq:R_ci_1bit}.

	\section{Achievable Rate With Singular Value Decomposition Based Transmission}
	The channel inversion precoding generally works well at high SNR and has poor performance at low SNR. In the second transmission method, we use the
	\ac{SVD} digital precoding. Since the interference between each steams can not be completely eliminated before quantization as in the channel inversion case, the exact achievable rate is unknown. We therefore choose to apply the additive quantization noise model (AQNM) \cite{Fletcher_JSTSP07, Mezghani_ISIT12, Bai_Qing_ETT15}, which is accurate enough at low SNR, to find a lower bound of the achievable rate.
	
	Applying the additive quantization noise model, the equivalent channel is
	\begin{align}
	\mb{r} =  \left( 1 - \eta_b \right) \left( \bG \Fbb \mb{s} + \Wrfx \mb{n} \right) + \bn_{\mr{Q}},
	\end{align}
	where $\eta_b$ is the distortion factor for $b$-bit ADC defined as $\eta_b = \frac{\mathbb{E} \left[\left(\bQ(y)-y\right)^2 \right]}{\mathbb{E} \left[|y|^2 \right]}$ and $\bn_{\mr{Q}}$ is the quantization noise with the variance $\eta_b (1-\eta_b) \mathrm{diag}\left\{ \frac{\Pt}{\Ns} \bG \Fbb \Fbbx \bG^{*} + \sigma_{\rmN}^2 \bI \right\}$.
	
	Assuming $\bs$ is Gaussian distributed and the quantization noise is the worst case of Gaussian distributed, a lower bound of the achievable rate is
	\begin{align} \label{eq:R_AQNM_lb}
	R_{\mr{AQNM}} = \log_2 \left|\bI +  (1-\eta_b) \frac{\rho}{\Ns} \Fbbx \mb{G}^*  \left( \bI + \eta_b \, \mathrm{diag} \left\{ \frac{\rho}{\Ns}\mb{G} \Fbb \Fbbx \mb{G}^{*} \right\}  \right)^{-1} \mb{G} \Fbb \right|.
	\end{align}
	At low SNR, the achievable rate is approximated to be
	\begin{align}
	R_{\mr{AQNM}} & =  \log_2 \left|\bI +  \left( 1-\eta_b \right) \frac{\rho}{\Ns} \Fbbx \mb{G}^* \mb{G} \Fbb  \right| + o(\rho)\\
	& =   \frac{ \left( 1-\eta_b \right) \rho}{\Ns \ln 2} \mr{tr} \left\{\Fbbx \mb{G}^* \mb{G} \Fbb  \right\} + o(\rho).
	\end{align}
	
	To maximize the term $\frac{1}{\Ns} \mr{tr} \left\{ \mb{G} \Fbb \Fbbx \mb{G}^*  \right\}$ under the constraint $\|\Fbb\|_F^2 \leq \Ns$, the optimal choice of $\Fbb$ is the eigenmode beamforming, i.e.,
	\begin{align}
	\Ns=1, \quad \text{and} \quad \Fbb = \mb{v}_1,
	\end{align}
	where $\bv_1$ is the right singular vector corresponding to the largest singular value of $\bG$. The resulting rate is
	\begin{align}
		R_{\mr{AQNM}} = \frac{ (1-\eta_b) \rho \nu_1^2}{\ln 2}  + o(\rho).
	\end{align}
	For one-bit quantization, $\eta_1 = 1 - \frac{2}{\pi} $ and the rate is
	\begin{align}
		R_{\mr{AQNM}} = \frac{2}{\pi} \frac{\rho \nu_1^2}{\ln 2} + o(\rho).
	\end{align}
	Notice that the upper bound of one-bit quantized channel at low SNR given in \eqref{eq:R_1bit_ub_low_SNR} is achieved by eigenmode beamforming.
	
	For higher SNR, the optimal $\Fbb$ maximizing the rate $R_{\mr{AQNM}}$ in \eqref{eq:R_AQNM_lb} is unknown. We therefore use the conventional \ac{SVD} precoding and waterfilling power allocation as done in \cite{Mezghani_WSA07, Orhan_ITA15}. The baseband digital precoder is
	\begin{align}
		\Fbb = \bV \, \mr{diag} \left\{ \sqrt{\mb{p}} \right\},
	\end{align}
	where $\bV$ is obtained from the singular value decomposition of the matrix $\bG$, i.e., $\bG = \bU \mb{\Sigma} \bV^*$ and $\bp$ denotes the power allocation factor obtained from the waterfilling method.
	
	
	For the analog precoding and combining, the optimal $\Frf$ and $\Wrf$ the rate in \eqref{eq:R_AQNM_lb} are unknown. We adopt the same alternating projection method in Algorithm \ref{alg:AltProj} to find a suboptimal solution. As a result, the analog precoding is designed by alternating projection method with initial values obtain by SVD of the channel and the the digital precoding is got by SVD of the baseband channel.
    The proposed SVD based design is summarized in Transmission Method \ref{alg:SVD}.
	
	
	\begin{megaalgorithm}[h]                      
		\caption{\ac{SVD} Based Transmission Method}          
		\label{alg:SVD}                           
		\begin{enumerate}
			\item \textbf{Analog precoding design}: Find the approximate solution $\overline{\bW}_{\mr{RF}}$ and $\overline{\bF}_{\mr{RF}}$ by the alternating projection method shown in Algorithm \ref{alg:AltProj}.
			
%
			\item \textbf{Digital precoding design}:
			\begin{enumerate}
				\item Compute the effective channel $\overline{\bG} \triangleq \overline{\bW}^*_{\mr{RF}} \bH \overline{\bF}_{\mr{RF}}$.
				\item Set the digital precoder $\Fbb$ by \ac{SVD} of $\overline{\bG}$ and waterfilling method.
			\end{enumerate}
			\item \textbf{Signaling}: $\bs$ is chosen to follow Gaussian signaling.
			\item \textbf{Quantization}: The thresholds of ADC are determined by Max-Lloyd  algorithm \cite{Max_IRE60, Lloyd_IT82} which minimizes the MSE of Gaussian distributed input.
		\end{enumerate}            
	\end{megaalgorithm}
	
	Last, we show why the AQNM model is not accurate enough at high SNR to model the quantization channel.
	At high SNR, $R_{\mr{AQNM}}$ converges as follows.
	\begin{align}
	R_{\mr{AQNM}}
	&\approx  \log_2 \left|\bI +  \frac{1-\eta_b}{\eta_b} \Fbbx \mb{G}^* \left( \mathrm{diag} \left\{ \mb{G} \Fbb \Fbbx \mb{G}^* \right\}  \right)^{-1}  \mb{G} \Fbb \right| \\
	& =  \log_2 \left| \bI +  \frac{1 - \eta_b}{\eta_b} \mb{A}^* \mr{diag}\left\{ \frac{1}{ ||\mb{a}_i|| }\right\} \mr{diag}\left\{ \frac{1}{ ||\mb{a}_i|| }\right\} \mb{A} \right| \\
	& =  \log_2 \left| \bI +  \frac{1 - \eta_b}{\eta_b} \widetilde{\mb{A}}^* \widetilde{\mb{A}} \right| \\
	& =  \sum_{i=1}^{\Ns} \log_2 \left(1 + \frac{1-\eta_b}{\eta_b} \lambda_i \left( \widetilde{\mb{A}}^* \widetilde{\mb{A}} \right)\right)
	\end{align}
	where $\bA \triangleq \mb{G} \Fbb$ is a Hermitian matrix, $\ba_i$ is the $i$-th column of $\bA$, and $\widetilde{\bA}$ is obtained by normalizing each each row of $\bA$.
	Since each row of $\widetilde{\bA}$ has unit norm, then
	\begin{align}
	\sum_{i=1}^{\Ns} \lambda_i \left( \widetilde{\bA}^* \widetilde{\bA} \right) = \mr{tr} \left( \widetilde{\bA}^* \widetilde{\bA} \right) = \Ns.
	\end{align}
	Therefore, we have
	\begin{align}
	R_{\mr{AQNM}} & \stackrel{(a)}{\leq}  \Ns \log_2 \left( 1 + \frac{1-\eta_b}{\eta_b} \right) \\
	&= \Ns \log_2 \frac{1}{\eta_b},
	\end{align}
	where $(a)$ follows from Jensen's inequality and that $\log_2 (1+x) $ is concave in $x$. When the ADC resolution $b$ is large ($b \geq 3$), the distortion factor $\eta_b$ can be approximated as \cite{Gersho_Book12}
	\begin{align}
		\eta_b \approx \frac{\pi \sqrt{3}}{2} 2^{-2b}.
	\end{align}
    As a result, the rate obtained by AQNM model is upper bounded by
    \begin{align}
      R_{\mr{AQNM}} &\leq 2 \Ns b - \Ns \log_2 \frac{\pi \sqrt{3}} {2} \\
      &\approx  2 \Ns b - 1.44\Ns .
    \end{align}
    However, we know the achievable rate of quantized MIMO channel is upper bounded by $2 \Ns b$ bps/Hz  and the channel inversion method can achieve the bound at high enough SNR as shown in \eqref{eq:R_ci_lb_bbit}.
    Therefore, the AQNM is not an accurate model at high SNR.
    The reason is threefold.
    First, the input signal $\bs$ is assumed to be follow suboptimal continuous Gaussian. Second, the quantization noise is assumed to be the worst-case Gaussian noise. Third, the Max-Loyd quantizer minimizing the MSE is not necessarily optimum for maximizing the channel capacity.

	\section{Simulation Results}
	
	We evaluate the performance of proposed methods in a mmWave MIMO channel with large antenna arrays and limited number of transmit and receive RF chains. According to measurement results \cite{Akdeniz_JSAC14, Rappaport_TCOM15}, the number of clusters tends to be lower in the mmWave band compared with lower frequencies. The mmWave channel will mostly consist of the line-of-sight (LOS) and a few NLOS clusters. In the simulations, the wireless channel is assumed to have 4 clusters, each of which consists of 5 rays. The angle spread is $7.5$ degrees. These numbers are chosen according to the urban macro (UMa) NLOS channel measurement results at 28 GHz given in the white paper \cite{WP_5G_Channel_Model}.
	The results are obtained by averaging over 100 channel realizations.
	
	\subsection{Achievable Rates}
\begin{figure}[]
    \centering
    \begin{subfigure}[]{0.50\columnwidth}
        \centering
        \includegraphics[trim= 0 0 20 10,clip,width=1\columnwidth]{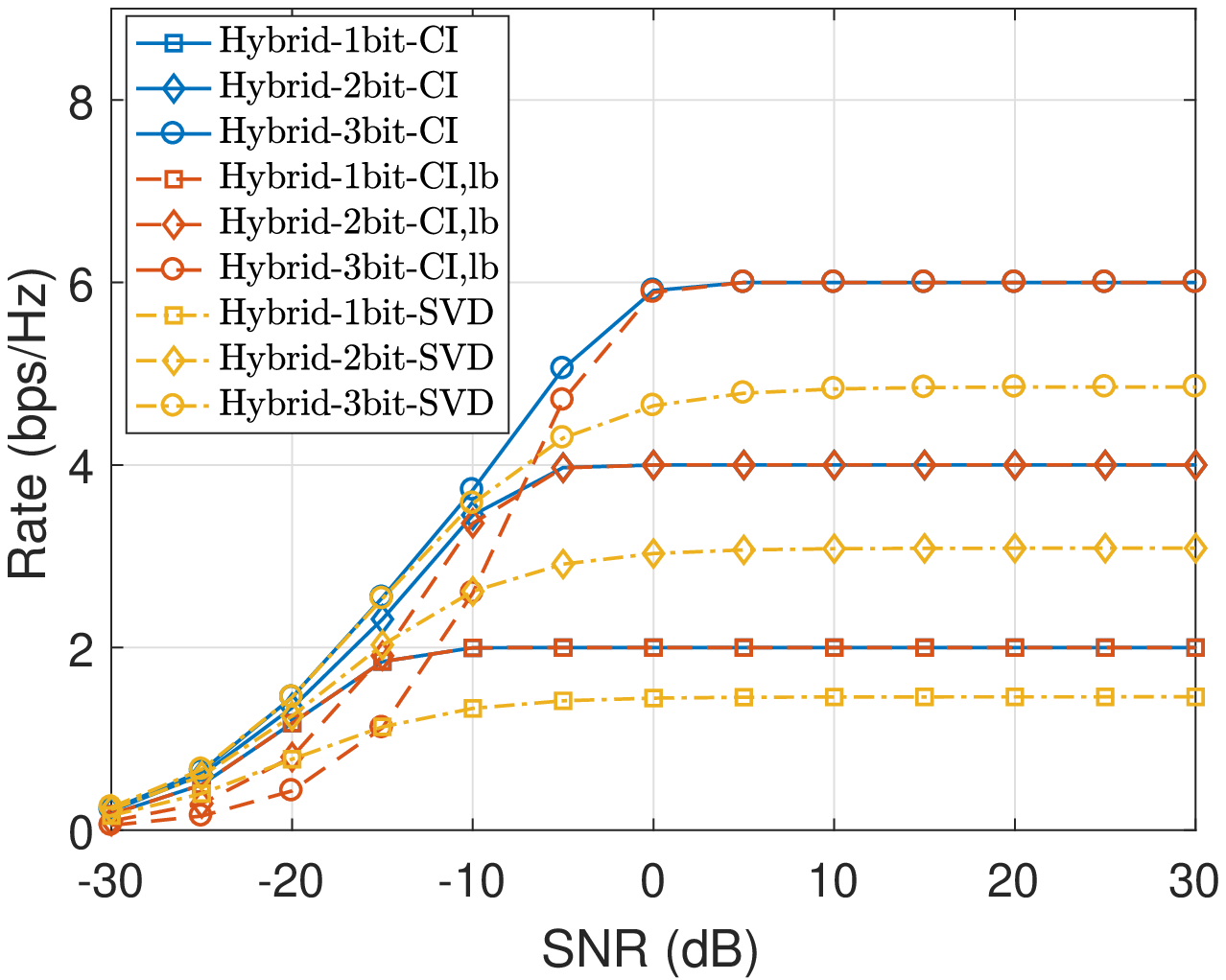}
        \vspace{-1cm}
        \caption{$\Ncr=1$}
        \label{fig:Rate_vs_SNR_Nt_64_Nr_8_Nct_8_Ncr_1_Clustered_CI_SVD}
    \end{subfigure}%
    \begin{subfigure}[]{0.50\columnwidth}
        \centering
        \includegraphics[trim= 0 0 20 10,clip,width=1\columnwidth]{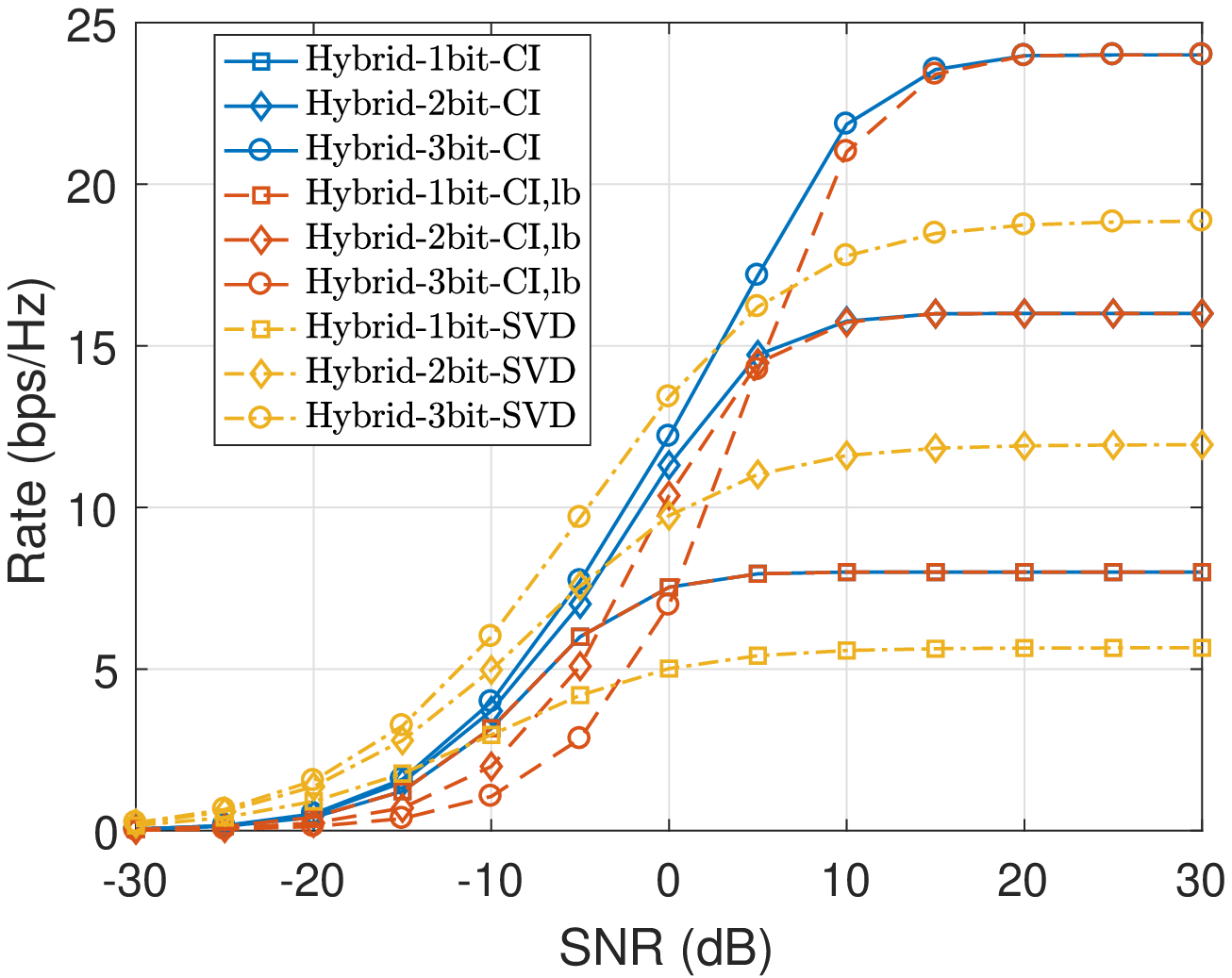}
        \vspace{-1cm}
        \caption{$\Ncr=4$}
        \label{fig:Rate_vs_SNR_Nt_64_Nr_8_Nct_8_Ncr_4_Clustered_CI_SVD}
    \end{subfigure}
    \caption{This figure shows rates versus SNR of different transmission methods. The transmitter is assumed to have  $64$ antennas and $8$ RF chains. The receiver employs $8$ antennas. In Fig. (a), the receiver has $1$ RF chains while in Fig. (b), the receiver has $4$ RF chains.}
    \label{fig:Rate_vs_SNR_Nt_64_Nr_8_Nct_8_Clustered_CI_SVD}
    \vspace{-0.3cm}
\end{figure}

	In this subsection, we evaluate the achievable rates of the proposed architecture by numerical simulations. The transmitter is assumed to have  $\Nt=64$ antennas and $\Nct = 8$ RF chains, while the receiver employs $\Nr=8$ antennas. The number of data steams is assumed to same as the number of receive RF chains, i.e., $\Ns = \Ncr$.
	
	\figref{fig:Rate_vs_SNR_Nt_64_Nr_8_Nct_8_Clustered_CI_SVD} shows the achievable rates when the receiver has 1 and 4 RF chains, respectively. In Fig. \ref{fig:Rate_vs_SNR_Nt_64_Nr_8_Nct_8_Ncr_1_Clustered_CI_SVD}, it is seen that when there is only one RF chains at the transmitter, the channel inversion and SVD method has close performance at low SNR.
	At high SNR, however, the channel inversion method achieves the rate $2b$ bps/Hz while the rate of SVD method saturates to $\log_2 \frac{1}{\eta_b}$ which is about $1.46$, $3.09$, $4.86$ bps/Hz for $b=1$, $2$, $3$, respectively. In \figref{fig:Rate_vs_SNR_Nt_64_Nr_8_Nct_8_Ncr_4_Clustered_CI_SVD}, another case when $\Ncr=4$ is shown. It is found that although at high SNR the channel inversion method achieves larger rate than SVD method, its performance at low SNR is much worse. The reason is that since the channel has only 4 clusters, the fourth largest singular value $\nu_4$ is small and the power loss $\log_2 \frac{\nu_1^2}{\nu_4^2}$ is large. Last, the lower bound provided in \eqref{eq:R_ci_lb_bbit} is also plotted. It is seen that the lower bound is tight for 1-bit ADC case. For other cases, the lower bound is tight at high SNR.
	
\begin{figure}[]
	\centering
	\begin{subfigure}[]{0.5\columnwidth}
		\centering
		\includegraphics[trim= 0 0 20 10,clip,width=1\columnwidth]{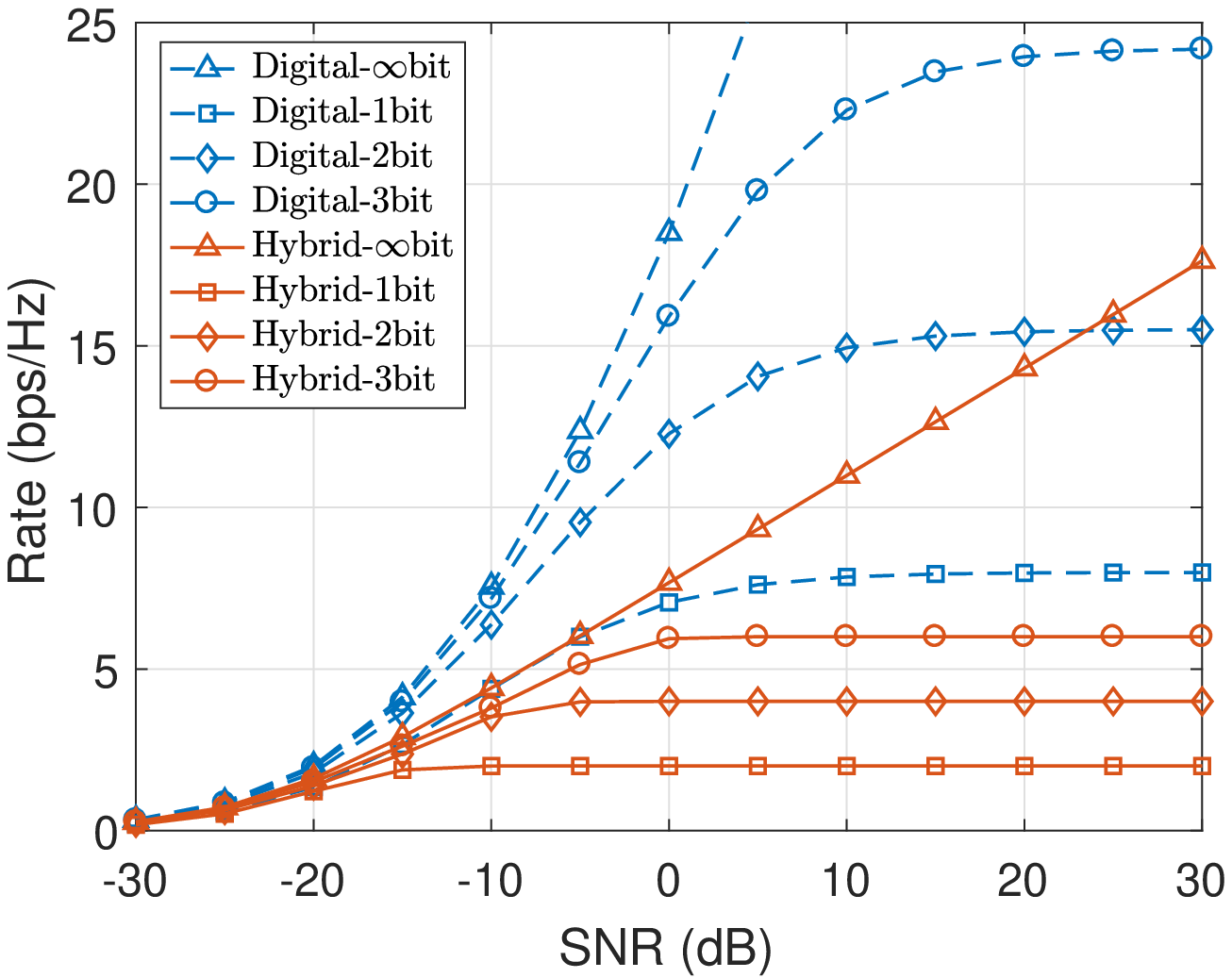}
		\vspace{-1cm}
		\caption{$\Ncr=1$}
		\label{fig:Rate_vs_SNR_Nt_64_Nr_8_Nct_8_Ncr_1_Clustered_Digital_Hybrid}
	\end{subfigure}%
	\begin{subfigure}[]{0.5\columnwidth}
		\centering
		\includegraphics[trim= 0 0 20 10,clip,width=1\columnwidth]{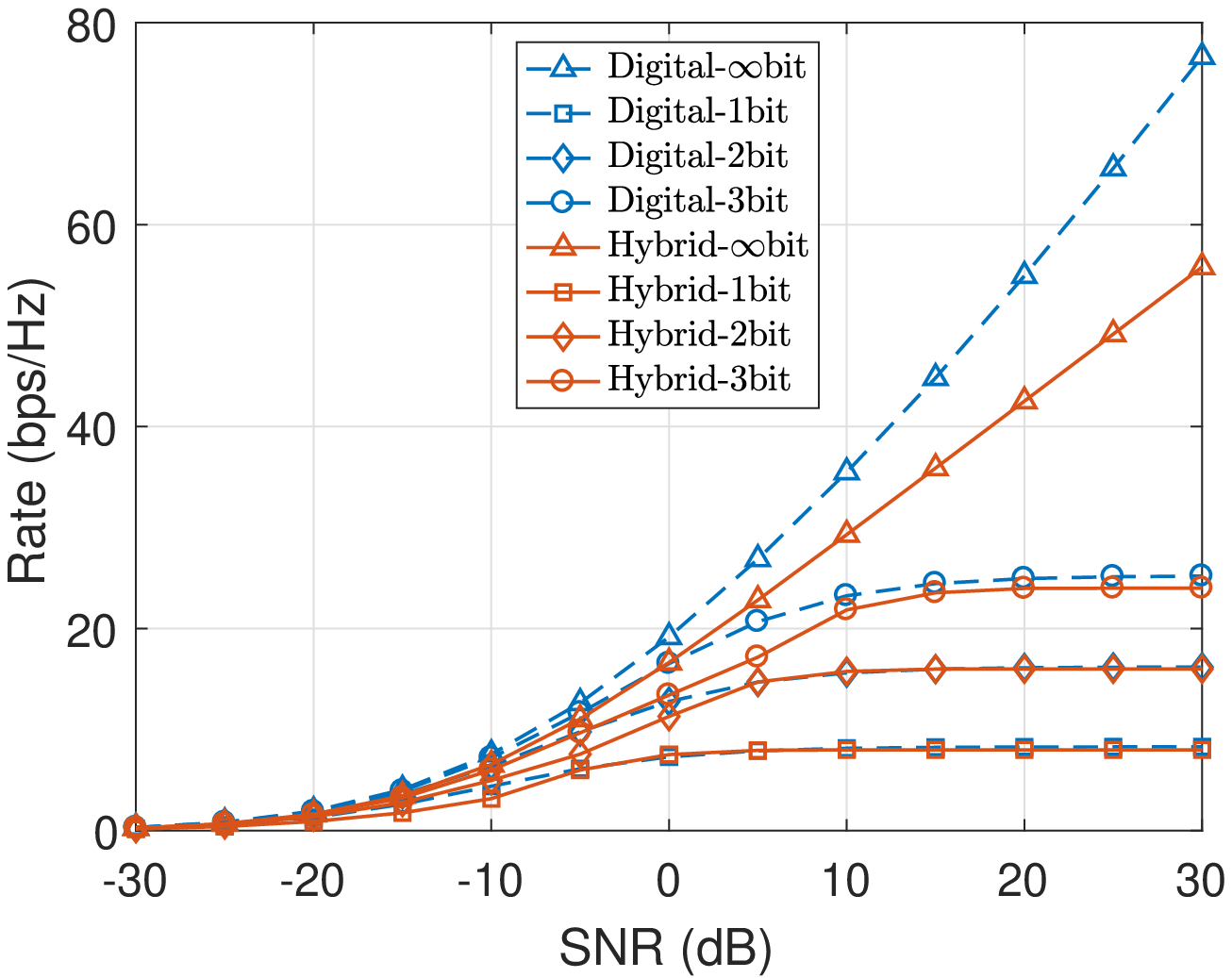}
		\vspace{-1cm}
		\caption{$\Ncr=4$}
		\label{fig:Rate_vs_SNR_Nt_64_Nr_8_Nct_8_Ncr_4_Clustered_Digital_Hybrid}
	\end{subfigure}
	\caption{This figure shows rates versus SNR of different transmission methods.
The transmitter is assumed to have  $64$ antennas and $8$ RF chains.
The receiver employs $8$ antennas. In Fig. (a), the receiver has $1$ RF chains while in Fig. (b), the receiver has $4$ RF chains.}
	\label{fig:Rate_vs_SNR_Nt_64_Nr_8_Nct_8_Clustered_Digital_Hybrid}
	\vspace{-0.3cm}
\end{figure}

\figref{fig:Rate_vs_SNR_Nt_64_Nr_8_Nct_8_Clustered_Digital_Hybrid} compares the achievable rate of fully-digital and hybrid architecture. The rate of hybrid architecture is the maximum of the CI and SVD method. The gap between the curves of ``Digital-$b$ bit'' and ``Hybrid-$b$ bit'' represents the loss due to limited number of RF chains while the gap between the curves of ``$\infty$bit'' and ``$b$bit'' is the loss due to low resolution ADCs.
It is seen that the digital architecture is much better than the hybrid architecture with only one RF chain because the hybrid architecture can only support single stream transmission while the digital architecture can support at most 8 streams simultaneously. However, when there are 4 receive RF chains in the hybrid architecture, the gap between these two architecture is small since there are the channel has 4 clusters. Last, we can see the loss due to low resolution ADCs is small at low and medium SNRs. For example, the gap between the curve ``Hybrid-$\infty$bit'' and ``Hybrid-3bit'' is less than 3 dB when the SNR is less than 10 dB in \figref{fig:Rate_vs_SNR_Nt_64_Nr_8_Nct_8_Ncr_4_Clustered_Digital_Hybrid}.



\begin{figure}[]
	\centering
	\begin{subfigure}[]{0.5\columnwidth}
		\centering
		\includegraphics[trim= 0 0 20 10,clip,width=1\columnwidth]{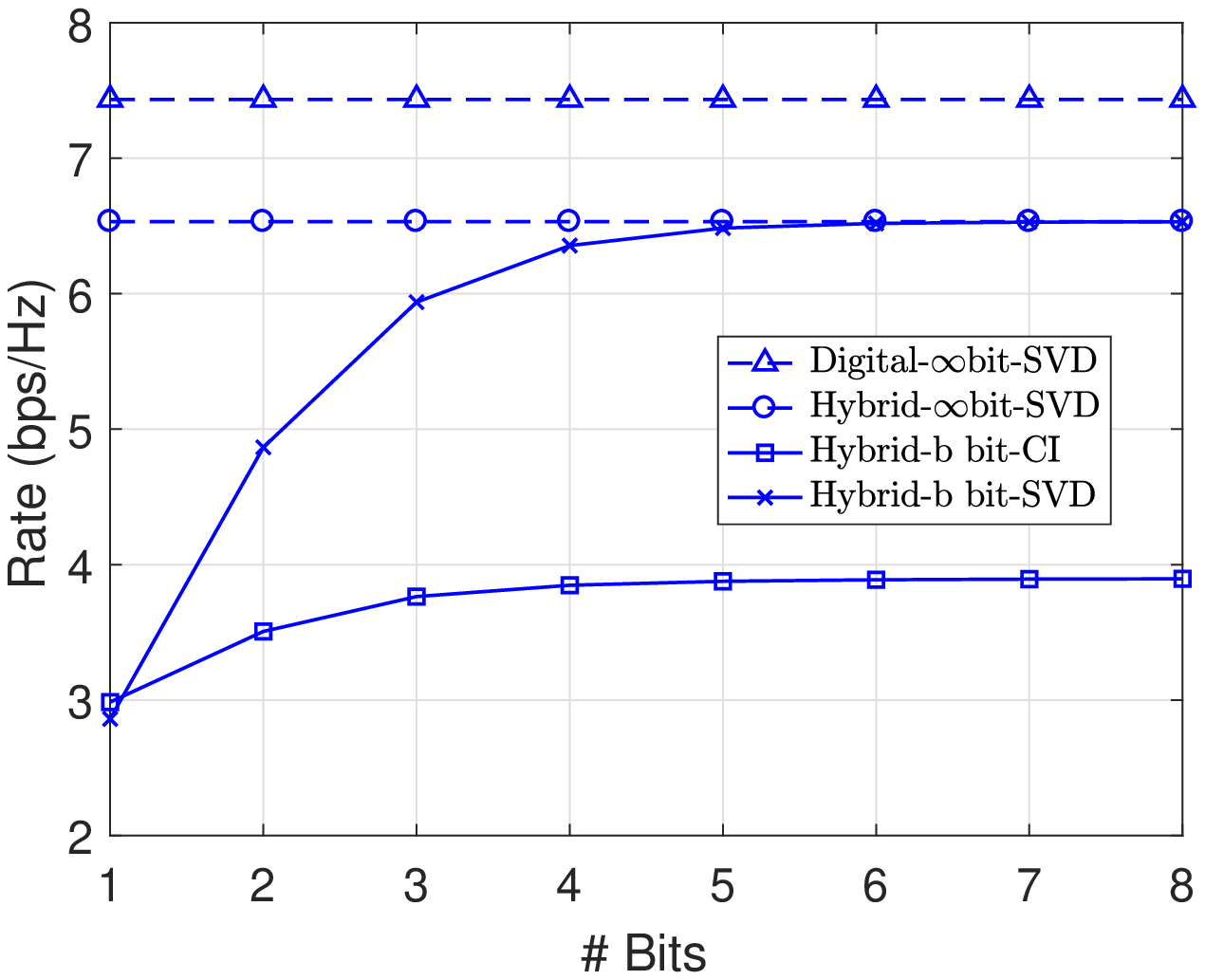}
		\vspace{-1cm}
		\caption{SNR$=-10$ dB}
		\label{fig:Rate_vs_Bits_Nt_64_Nr_8_Nct_8_Ncr_4_SNR_-10dB_Clustered}
	\end{subfigure}%
	\begin{subfigure}[]{0.5\columnwidth}
		\centering
		\includegraphics[trim= 0 0 20 10,clip,width=1\columnwidth]{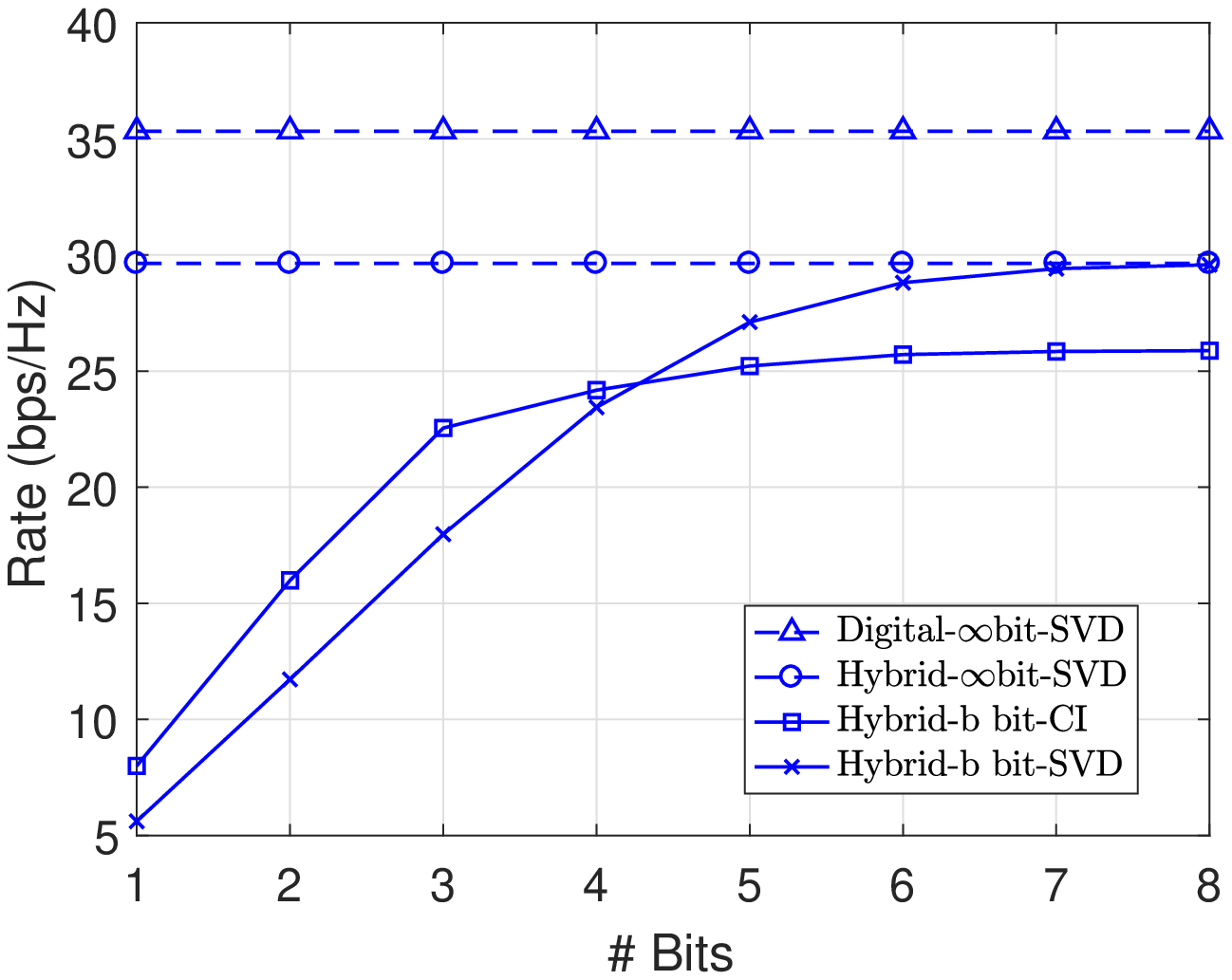}
		\vspace{-1cm}
		\caption{SNR$=10$ dB}
		\label{fig:Rate_vs_Bits_Nt_64_Nr_8_Nct_8_Ncr_4_SNR_10dB_Clustered}
	\end{subfigure}
	\caption{This figure shows rates versus ADC resolution for different transmission methods. The transmitter is assumed to have  $64$ antennas and $8$ RF chains. The receiver employs $8$ antennas and $4$ RF chains. In Fig. (a), the SNR is $-10$ dB while in Fig. (b), the SNR is $10$ dB.}
	\label{fig:Rate_vs_Bits_Nt_64_Nr_8_Nct_8_Ncr_4_Clustered}
	\vspace{-0.3cm}
\end{figure}
	
	Fig. \ref{fig:Rate_vs_Bits_Nt_64_Nr_8_Nct_8_Ncr_4_Clustered} shows the achievable rate with respect to the ADC resolution. First, as expected, the rates of the finite-bit ADC receiver increase with resolution. Second, with multi-bit ADCs (5-bit when SNR = $-10$ dB and 7-bit when SNR = $10$ dB), the SVD method achieves the performance similar to that of hybrid architecture with $\infty$-bit ADCs. This implies that high resolution ADC do not provide much gain compared to the few-bit ADC when the SNR is low.
	Third, when the ADC resolution is low, channel inversion method is better than the SVD method while with high resolution quantization, the SVD method is better. This is reasonable since with high resolution ADC, the channel is close to the one without quantization and in a unquantized channel, SVD method is optimum.


	 \figref{fig:Rate_vs_Ncr_Nt_64_Nr_8_Nct_8_Clustered} presents the achievable rates versus the number of RF chains at the receiver. First, we find that the rate of SVD method always increases with $\Ncr$. Second, at low SNR ($-10$ dB), the channel inversion method achieve the largest rate when $\Nct=2$. This means at low SNR, it is better to turn off some RF chains and transmit fewer number of steams. Note that the power consumption also decreases by turning off some RF chains.
	Third, we find that compared to fully-digital architecture where $\Nct=\Nt=64$ and $\Ncr=\Nr=8$, the hybrid architecture with limited number of RF chains ($\Nct=8$ and $\Ncr=2$) and low resolution ADCs (4-bit) incurs about 20\%-30\% spectral efficiency loss. As shown in the next subsection, however, the energy efficiency of the proposed receiver is much higher than the fully-digital architecture.
	
	\begin{figure}[]
		\centering
		\begin{subfigure}[]{0.5\columnwidth}
			\centering
			\includegraphics[trim= 0 0 20 5,clip,width=1\columnwidth]{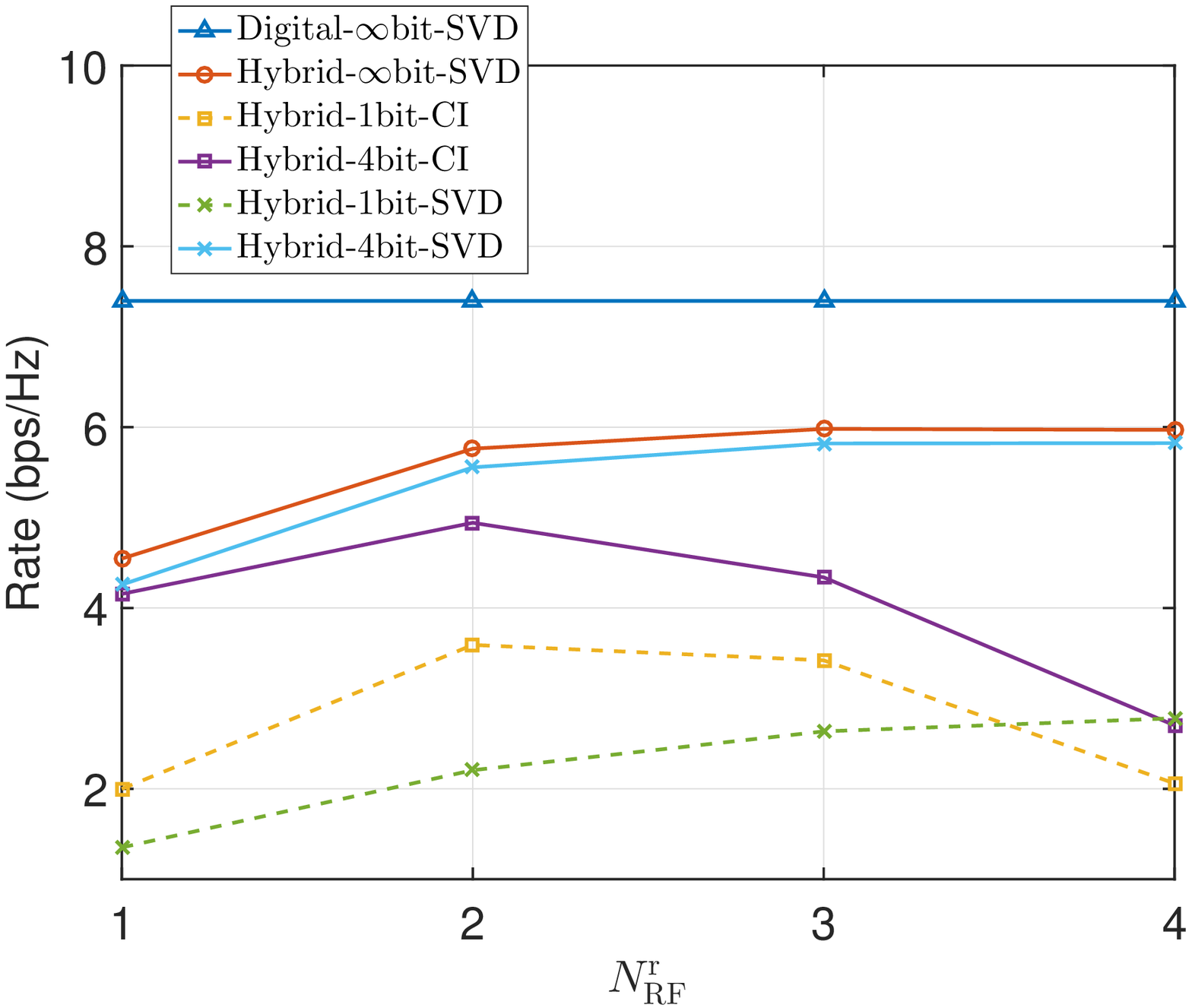}
			\vspace{-1cm}
			\caption{SNR$=-10$ dB}
			\label{fig:Rate_vs_Ncr_Nt_64_Nr_8_Nct_8_SNR_-10dB_Clustered}
		\end{subfigure}%
		\begin{subfigure}[]{0.5\columnwidth}
			\centering
			\includegraphics[trim= 0 0 20 5,clip,width=1\columnwidth]{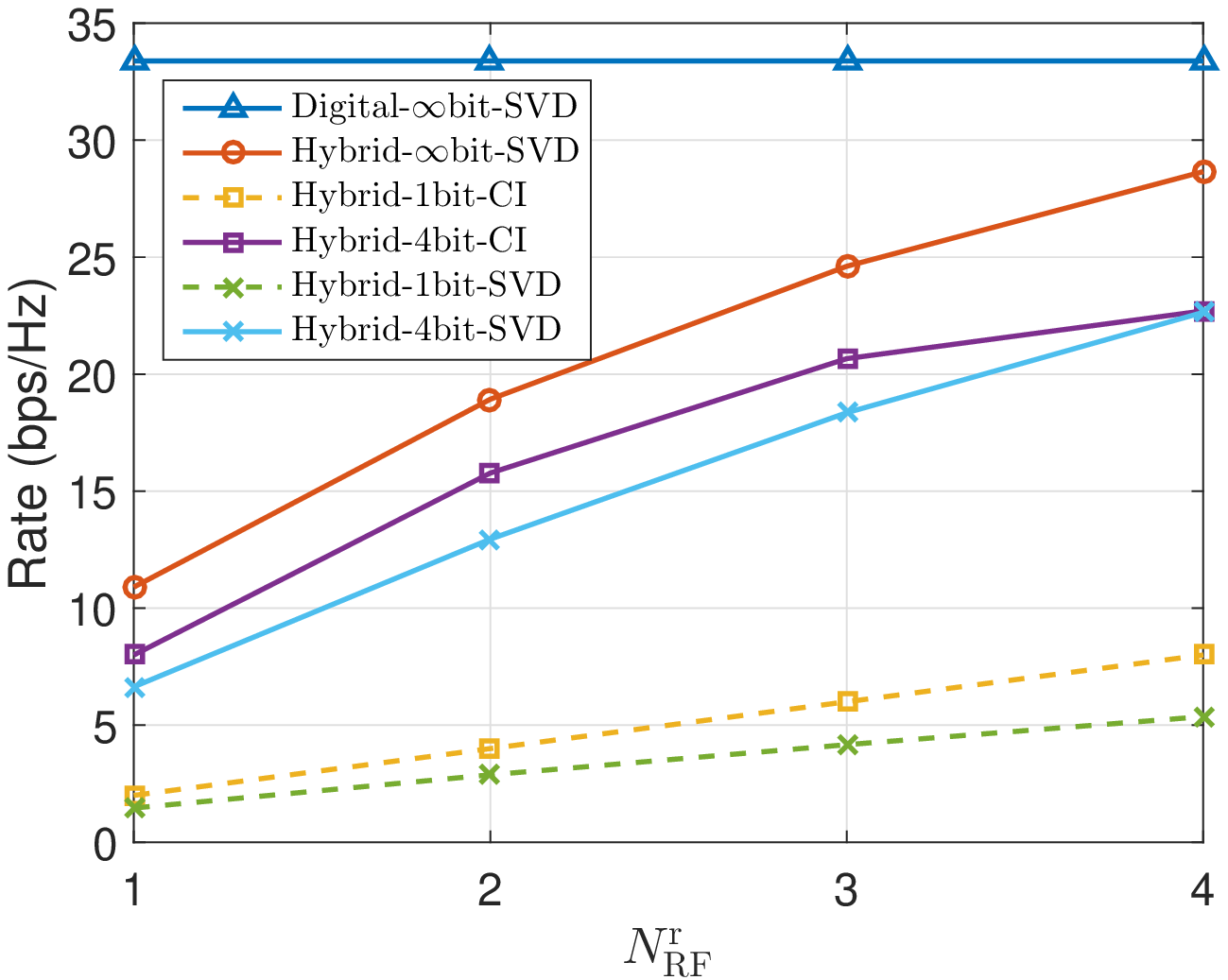}
			\vspace{-1cm}
			\caption{SNR$=10$ dB}
			\label{fig:Rate_vs_Ncr_Nt_64_Nr_8_Nct_8_SNR_10dB_Clustered}
		\end{subfigure}
		\caption{This figure shows rates versus the number of receive RF chains for different transceiver architecture and transmission methods. The transmitter is assumed to have  $64$ antennas and $8$ RF chains, while the receiver employs $8$ antennas. In Fig. (a), the SNR is $-10$ dB while in Fig. (b), the SNR is $10$ dB.}
		\label{fig:Rate_vs_Ncr_Nt_64_Nr_8_Nct_8_Clustered}
		\vspace{-0.3cm}
	\end{figure}

	\subsection{Energy efficiency}
	
	In this subsection, we evaluate the performance of the different receiver architectures by investigating the trade-off between their achievable rates and power consumption. First, we formulate a generic power consumption model for the hybrid architecture with low-resolution ADCs. Then, we use this model in the performance evaluation. Consider the system model in \figref{fig:System_model} with a $b$-bit ADC receiver having $\Nr$ antennas and $\Ncr$ RF chains. Let $P_\mathrm{LNA}$, $P_\mathrm{PS}$, $P_\mathrm{RFchain}$, $P_\mathrm{ADC}$, $P_\mathrm{BB}$ denote the power consumption in the LNA, phase shifter, RF chains, ADC, and baseband processor, respectively. Then, the consumed power by the hybrid combining receiver in \figref{fig:System_model} can be approximated as \cite{Mendez-Rial_Access16}
	\begin{equation}\label{eq:PowerModel1}
	P_\mathrm{tot}= \Nr P_\mathrm{LNA}+\Ncr\left(\Nr P_\mathrm{PS}+P_\mathrm{RF chain}+2 P_\mathrm{ADC} \right)+P_\mathrm{BB},
	\end{equation}
	where the power consumed by ADCs can be further expressed in terms of the number of bits as
	%
	\begin{equation}\label{eq:PowerModel2}
	P_\mathrm{ADC}= FOM_{W} \cdot f_s \cdot 2^{b},
	\end{equation}
	where $f_s$ is the Nyquist sampling rate, $b$ is the number of bits, and $FOM_{W}$ is Walden's figure-of-merit for evaluating ADC's power efficiency with resolution and speed \cite{Walden_JSAC99,Murmann_16},

	\begin{figure}[t]
	\begin{centering}
		\includegraphics[width=.7\columnwidth]{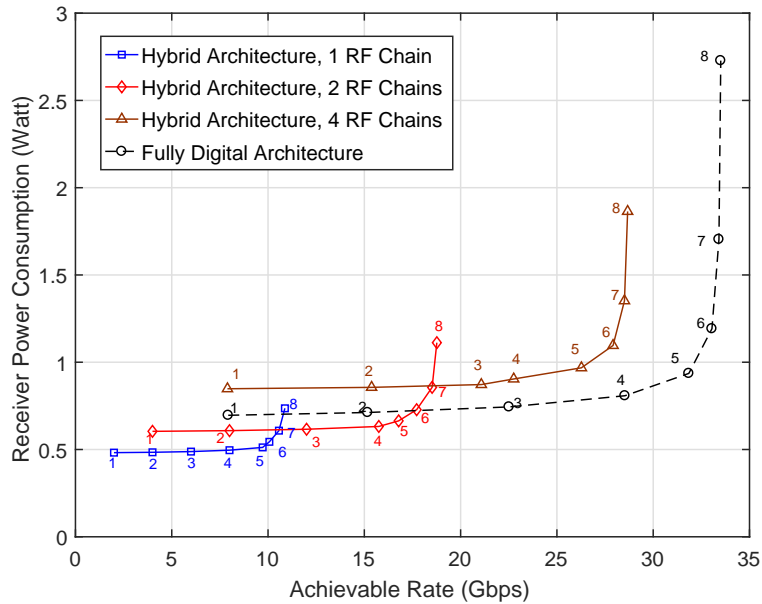}
		\vspace{-0.1cm}
		\centering
		\caption{Trade-off between the achievable rate and power consumption for hybrid combiners with different number of RF chains and ADC bits. The results also include the fully-digital architecture with finite-bit ADCs. The transmitter is assumed to have  $64$ antennas and $8$ RF chains, while the receiver employs $8$ antennas and  $1, 2 \ \text{or} \ 4$ RF chains. The SNR is 10 dB and the bandwidth is 1 GHz.}
		\label{fig:Power_Rate}
	\end{centering}
	\vspace{-0.3cm}
	\end{figure}
	
	Next, we present some simulation results in Figs. \ref{fig:Power_Rate}-\ref{fig:EE_vs_Bits} that illustrate the power consumption-rate trade-off of the hybrid combining receiver for different number of RF chains and ADC quantization bits at the receiver. In these simulations, we adopt the same clustered channel model, and consider the system model in \sref{sec:System_Model} with a transmitter having $64$ antennas and $8$ RF chains, and a receiver deploying $8$ antennas.
	
	First, we plot in \figref{fig:Power_Rate} the power consumption versus the achievable rate for the fully-digital and proposed hybrid architecture with few-bit ADC receivers. The power consumption in the different receivers are calculated based on the power model in \eqref{eq:PowerModel1}-\eqref{eq:PowerModel2}. The consumed power in the RF components are assumed to be $ P_\mathrm{LNA}=20$ mW, $P_\mathrm{PS}=10$ mW, $P_\mathrm{RF chain}=40$ mW, $P_\mathrm{BB}=200$ mW \cite{Mendez-Rial_Access16, Lin_Yu-Hsuan_IMS16}. 
	For the ADCs, we adopt the power consumption model in \eqref{eq:PowerModel2} with $FOM_{W}=500 $ fJ/conversion-step, which is a typical achievable value at 1 GHz \cite{Murmann_16,Ali_JSSC14} \footnote{The minimum achievable value of $FOM_{W}$ can be as low as $5$ fJ/conversion-step
		 as shown in the Figure ``Walden FOM vs. Speed'' in \cite{Murmann_16}. }.

	The achievable rate for the hybrid architecture with few-bit ADCs in \figref{fig:Power_Rate} is calculated as the maximum of the two achievable rate expressions in \eqref{eq:R_ci_bbit} and \eqref{eq:R_AQNM_lb} for CI and SVD based precoding. Several important insights can be obtained from \figref{fig:Power_Rate}.
	First, for all the four cases in the figure, the achievable rate significantly increases when the number of bits increases from $1$ to $4-5$ bits, with a negligible increase in the receiver power consumption. This implies that with these system configurations, having $4-5$ bits ADC's can lead to a better trade-off between the rate and power consumption compared with having 1-bit ADC receivers. Next, going from $4-5$ bits to $7-8$ bits for the ADCs in the hybrid architectures slightly improves the achievable spectral efficiency, but with significantly more power consumption. This means that $4-5$ ADC bits can be better than $7-8$ ADC bits when investigating the rate-power consumption trade-off.
	Second, it is found that among the four cases considered, the hybrid one with 1 RF chain gives the best power-rate trade-off when the rate is less than 11 Gbps, the hybrid one with 2 RF chains gives the best trade-off when the rate is between 11 Gbps and 18 Gbps, and the fully-digital architecture is best when the rate is larger than 18 Gbps.
	%
	%
	Third, when there are 4 RF chains, the performance of hybrid architecture is dominated by that of the fully-digital architecture. To achieve the same rate, the power consumption of hybrid receiver with 4 RF chains is always larger than of fully-digital one because of the high power consumption of the large number of analog phase shifters.

	\begin{figure}[]
		\centering
		\begin{subfigure}[]{0.5\columnwidth}
			\centering
			\includegraphics[trim= 0 0 20 10,clip, width=1\columnwidth]{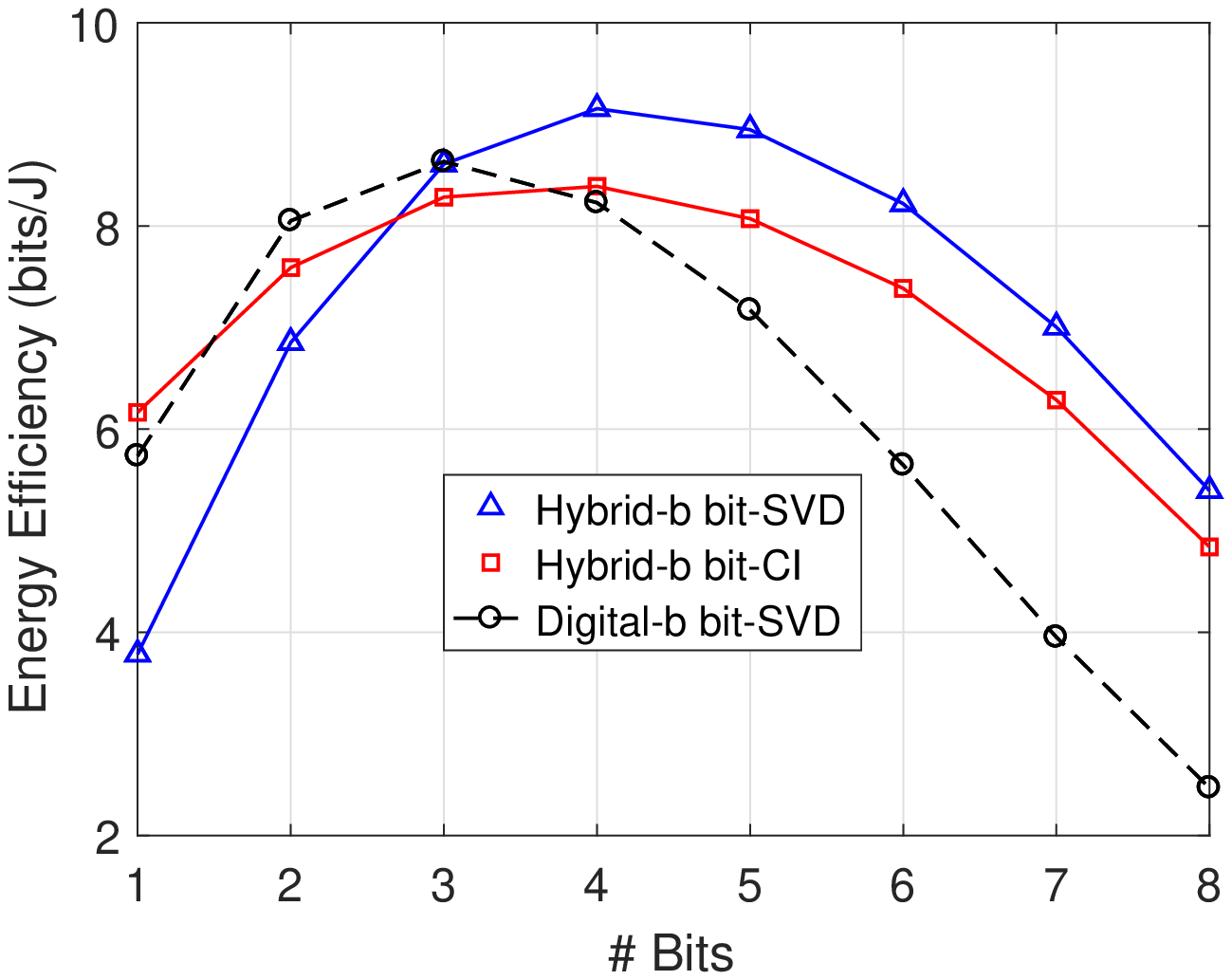}
			\vspace{-1cm}
			\caption{SNR$=-10$ dB}
			\label{fig:EE_-10dB}
		\end{subfigure}%
		\begin{subfigure}[]{0.5\columnwidth}
			\centering
			\includegraphics[trim= 0 0 20 10,clip,width=1\columnwidth]{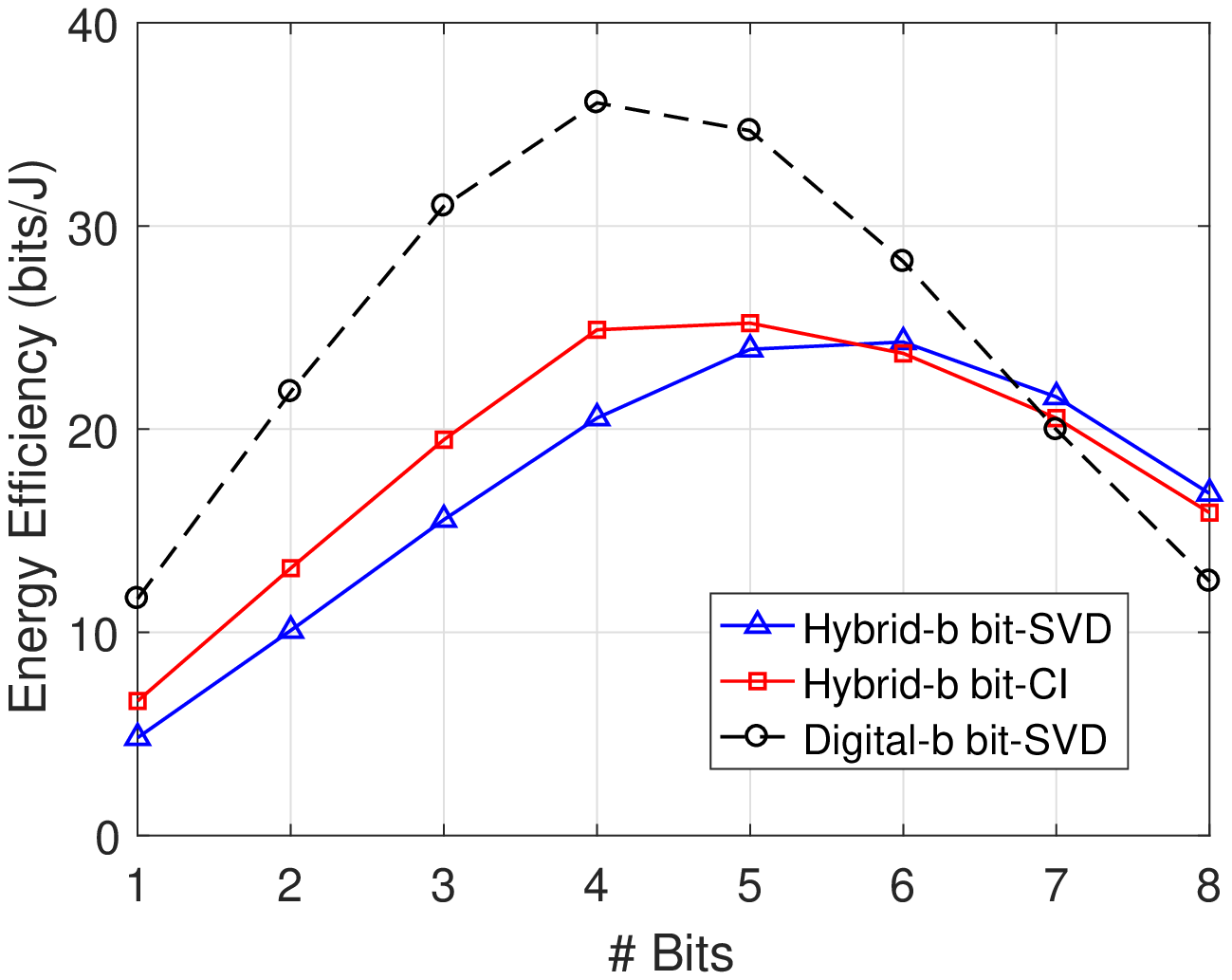}
			\vspace{-1cm}
			\caption{SNR$=10$ dB}
			\label{fig:EE_10dB}
		\end{subfigure}
		\caption{Energy efficiency of the digital and hybrid architecture with few-bit ADC receivers for different numbers of bits. The transmitter is assumed to have  $64$ antennas and $8$ RF chains, while the receiver employs $8$ antennas and  $2$ RF chains. In Fig. (a), the SNR is $-10$ dB while in Fig. (b), and the SNR is $10$ dB.}
		\label{fig:EE_vs_Bits}
		\vspace{-0.3cm}
	\end{figure}
	
	Based on the observations made from \figref{fig:Power_Rate}, we investigate the energy efficiency of the different receiver architectures in \figref{fig:EE_vs_Bits}. The energy spectral efficiency $\eta_\mathrm{EE}$ is defined as
	\begin{equation}
	\eta_\mathrm{EE}=\frac{R \ W}{P_\mathrm{tot}} \mathrm{bits}/ \mathrm{Joule},
	\end{equation}
	where $R$ is the achievable spectral efficiency and $W$ is the transmission bandwidth. The channel and system models adopted in \figref{fig:EE_-10dB}-\figref{fig:EE_10dB} are similar to that in \figref{fig:Power_Rate} with $2$ RF chains used at the receiver and with $W=1$ GHz bandwidth. In \figref{fig:EE_-10dB}, the energy efficiency of the few-bit ADC hybrid architecture is plotted for two different precoding techniques versus the number of ADC bits at SNR$=-10$ dB. The result is compared with the energy efficiencies of the fully-digital transceiver. \figref{fig:EE_-10dB} shows that the number of bits maximizing the energy efficiency is finite and equals to $4$ bits. This maximum energy efficiency is achieved with the SVD precoding in \eqref{eq:R_AQNM_lb}. It is worth noting from \figref{fig:EE_-10dB} that the choice of the precoding technique depends on the number of ADC quantization bits. While the channel inversion achieves a better energy efficiency with $1-2$ ADC bits, SVD precoding can lead to a better energy efficiency for higher numbers of ADC quantization bits. \figref{fig:EE_10dB} has the same comparison in \figref{fig:EE_-10dB} with the only difference of operating at SNR=$10$ dB. The energy efficiency trends in \figref{fig:EE_10dB} are similar to that in \figref{fig:EE_-10dB} with the main difference that the maximum energy efficiency is achieved with the channel inversion precoding in  \eqref{eq:R_ci_bbit}. This implies that determining the precoding techniques should take the SNR into consideration. We note though the $4$ bits for the ADC quantization still achieves the optimal energy efficiency. Further, \figref{fig:EE_-10dB}-\figref{fig:EE_10dB} also show that energy efficiency of the proposed few-bit ADC hybrid architecture can be much better that that of the fully-digital architectures that employ high resolution ADCs. The reason is that the ADC will dominate the power consumption when the resolution is high and hybrid architecture saves power by reducing the number of high-resolution ADCs.
	
	\section{Conclusion}
	
	In this paper, we derived the achievable spectral efficiency and energy-rate trade-off of a generalized hybrid architecture with few-bit ADC receivers. First, we considered channel inversion and SVD precoding based transmission methods and derived their achievable rates. The transmission methods include three elements: the design of analog and digital precoding, the choice of the transmit signal distribution and the setup of quantizer.
	Simulation results showed that at the low and medium SNRs, the proposed architecture and precoding can achieve a comparable rate to the fully-digital solution. Second, we explored the trade-off between the achievable rate and the power consumption, which is particularly important in massive MIMO and mmWave systems. Adopting energy efficiency as a performance metric, the proposed hybrid architecture with few-bit ADC receiver is compared with the fully-digital transceiver and the conventional hybrid architecture with full-resolution ADCs. Numerical results showed that the generalized hybrid architecture with coarse ADC quantization achieves a potential energy efficiency gain compared to the other architectures. The results illustrated that coarse ADC quantization with $4-5$ bits normally achieves the maximum energy efficiency in various system configurations.

	
	
	\appendices
	\section{Proof of Proposition 1}\label{app_upperbound}
	
	The proof is similar to the proof of \cite[Proposition 4]{Mo_Jianhua_TSP15}.
	The channel without any constraint on digital precoding and combining is,
	\begin{align}
	\mb{r} &= \mathrm{sgn} \left( \Wrfx \mb{H} \Frf  \mb{x} + \Wrfx \mb{n} \right).
	\end{align}
	We want to find a upper bound of the mutual information $I(\br; \bx)$ subject to the power constraint $\mathbb{E}[\|\Frf \bx\|^2] \leq \Pt$.
	
	First, we have,
	\begin{align}
	I(\mathbf{r}; \mathbf{x}) &= \cH(\mathbf{r}) - \cH(\mathbf{r}|\mathbf{x}) \\
	& \stackrel{(a)}{\leq} 2 \Ncr - \cH(\mathbf{r}| \mathbf{x}),
	\end{align}
	where inequality $(a)$ follows from that there are at most $2\Ncr$ quantization outputs.
	
	Next, we derive a lower bound of $\cH(\mathbf{r}|\mathbf{x})$.
	For a given transmitted signal $\mathbf{x}=\mathbf{x}'$, denote $\mathbf{z}' = \Wrfx \mb{H} \Frf \mathbf{x}'$ and $\widetilde{\mathbf{z}}' = [\Re(\mathbf{z}')^T, \Im(\mathbf{z}')^T]^T$. The conditional entropy of $\mathbf{r}$ given $\mathbf{x}=\mathbf{x}'$ is,
	\begin{align}
	\cH( \mathbf{r}| \mathbf{x} = \mathbf{x}')
	&\stackrel{(a)}{=} \sum_{j=1}^{\Ncr} \cH( {\mathbf{r}}_j| {\mathbf{x}'})  \\
	&\stackrel{(b)}{=} \sum_{j=1}^{2 \Ncr} \Hb \left( Q \left( \sqrt{\frac{2}{\sigma_{\rmN}^2}} \widetilde{\mathbf{z}}'_{j} \right) \right)  \\
	&\stackrel{(c)}{=} \sum_{j=1}^{2 \Ncr} \Hb \left( Q \left( \sqrt{\frac{2 \left(\widetilde{\mathbf{z}}'_{j} \right)^2}{\sigma_{\rmN}^2}} \right) \right)
	\end{align}
	where $(a)$ follows from that the effective noise $\Wrfx \bn$ has independent entries, $(b)$ follows from that the in-phase and quadrature parts of the effective noise are independent and Gaussian distributed with variance $\frac{\sigma_{\rmN}^2}{2}$, $(c)$ follows from that $\Hb(Q(x))$ is an even function of $x$.
	Next, noting that
	\begin{align}
	\sum_{j=1}^{2 \Ncr} \left(\widetilde{z}'_j \right)^2 = \|\mathbf{z}'\|^2 \leq \|\mathbf{x}'\|^2 \nu_{1}^2,
	\end{align}
	where $\nu_{1}$ is the maximum singular value of the effective channel $\Wrfx \mb{H} \Frf$,
	we have
	\begin{align}
	\cH( \mathbf{r}| \mathbf{x} = \mathbf{x}') \geq  2 \Ncr \Hb \left( Q \left( \sqrt{ \frac{ \|\mathbf{x}'\|^2 \nu_{1}^2 }{ \Ncr \sigma_{\rmN}^2 } } \right) \right)
	\end{align}
	by Jensen's inequality since $\Hb\left( Q\left( \sqrt{x}\right)\right)$ is decreasing and convex in $x$ (see \cite{Dabeer_SPAWC06} for the proof of convexity).
	
	The conditional entropy of $\mathbf{r}$ is,
	\begin{align}
	\cH({\mathbf{r}}|{\mathbf{x}}) = \mathbb{E}_{\mathbf{x}} \left[ 2 \Ncr \Hb \left( Q \left( \sqrt{ \frac{ \|\mathbf{x}\|^2 \nu_{1}^2}{ \Ncr \sigma_{\rmN}^2 } } \right) \right) \right].
	\end{align}
	
	Since $\Hb(Q(\sqrt{z}))$ is convex, then
	\begin{align}
	\cH(\mathbf{r}|\mathbf{x}) \geq  2 \Ncr \Hb \left( Q \left( \sqrt{ \frac{ \mathbb{E}[\|\mathbf{x}\|^2] \nu_{1}^2}{ \Ncr \sigma_{\rmN}^2 } } \right) \right).
    \end{align}

    We want to minimize $\cH(\mathbf{r}|\mathbf{x})$ subject to the power constraint $\mathbb{E} \left[ \| \Frf \bx \|^2 \right] \leq \Pt$,
	which is equivalent to $\mathbb{E}\left[ \|\mathbf{x} \|^2 \right] \leq \Pt$ by the assumption $\Frfx \Frf = \bI$. As $\Hb(Q(\sqrt{z}))$ is decreasing in $z$, we have
    \begin{align}
    \cH(\mathbf{r}|\mathbf{x})
	& \geq  2 \Ncr \Hb \left( Q \left( \sqrt{ \frac{ \rho \nu_{1}^2 }{ \Ncr} } \right) \right),
	\end{align}
    where $\rho \triangleq \frac{\Pt}{\sigma_{\rmN}^2}$.
	Therefore, the upper bound in \eqref{eq:ub_tight} is obtained.
	
	\linespread{1.3}
	\bibliographystyle{IEEEtran}
	\bibliography{IEEEabrv,One_bit_Quantization,AlkhateebRef_Oct}

\begin{thebibliography}{10}
\providecommand{\url}[1]{#1}
\csname url@samestyle\endcsname
\providecommand{\newblock}{\relax}
\providecommand{\bibinfo}[2]{#2}
\providecommand{\BIBentrySTDinterwordspacing}{\spaceskip=0pt\relax}
\providecommand{\BIBentryALTinterwordstretchfactor}{4}
\providecommand{\BIBentryALTinterwordspacing}{\spaceskip=\fontdimen2\font plus
\BIBentryALTinterwordstretchfactor\fontdimen3\font minus
  \fontdimen4\font\relax}
\providecommand{\BIBforeignlanguage}[2]{{%
\expandafter\ifx\csname l@#1\endcsname\relax
\typeout{** WARNING: IEEEtran.bst: No hyphenation pattern has been}%
\typeout{** loaded for the language `#1'. Using the pattern for}%
\typeout{** the default language instead.}%
\else
\language=\csname l@#1\endcsname
\fi
#2}}
\providecommand{\BIBdecl}{\relax}
\BIBdecl

\bibitem{Mo_Jianhua_WSA16}
J.~Mo, A.~Alkhateeb, S.~Abu-Surra, and R.~W. Heath~Jr., ``Achievable rates of
  hybrid architectures with few-bit {ADC} receivers,'' in \emph{Proceeedings of
  20th International ITG Workshop on Smart Antennas}, March 2016, pp. 1--8.

\bibitem{Marzetta2010}
T.~Marzetta, ``Noncooperative cellular wireless with unlimited numbers of base
  station antennas,'' \emph{IEEE Transactions on Wireless Communications},
  vol.~9, no.~11, pp. 3590--3600, November 2010.

\bibitem{Boccardi2014}
F.~Boccardi, R.~Heath, A.~Lozano, T.~Marzetta, and P.~Popovski, ``Five
  disruptive technology directions for {5G},'' \emph{IEEE Communications
  Magazine}, vol.~52, no.~2, pp. 74--80, Feb. 2014.

\bibitem{Larsson2014}
E.~Larsson, O.~Edfors, F.~Tufvesson, and T.~Marzetta, ``Massive {MIMO} for next
  generation wireless systems,'' \emph{IEEE Communications Magazine}, vol.~52,
  no.~2, pp. 186--195, Feb. 2014.

\bibitem{Rappaport2014}
T.~S. Rappaport, R.~W. Heath~Jr, R.~C. Daniels, and J.~N. Murdock,
  \emph{Millimeter Wave Wireless Communications}.\hskip 1em plus 0.5em minus
  0.4em\relax Pearson Education, 2014.

\bibitem{11ad}
\BIBentryALTinterwordspacing
{IEEE 802.11ad}, ``{IEEE} 802.11ad standard draft {D0.1}.'' [Online].
  Available: \url{www.ieee802.org/11/Reports/tgad update.htm}
\BIBentrySTDinterwordspacing

\bibitem{Rappaport2013a}
T.~Rappaport, S.~Sun, R.~Mayzus, H.~Zhao, Y.~Azar, K.~Wang, G.~Wong, J.~Schulz,
  M.~Samimi, and F.~Gutierrez, ``Millimeter wave mobile communications for {5G}
  cellular: It will work!'' \emph{IEEE Access}, vol.~1, pp. 335--349, May 2013.

\bibitem{Roh2014}
W.~Roh, J.-Y. Seol, J.~Park, B.~Lee, J.~Lee, Y.~Kim, J.~Cho, K.~Cheun, and
  F.~Aryanfar, ``Millimeter-wave beamforming as an enabling technology for {5G}
  cellular communications: theoretical feasibility and prototype results,''
  \emph{IEEE Communications Magazine}, vol.~52, no.~2, pp. 106--113, February
  2014.

\bibitem{Alkhateeb2014d}
A.~Alkhateeb, J.~Mo, N.~Gonzalez-Prelcic, and R.~Heath, ``{MIMO} precoding and
  combining solutions for millimeter-wave systems,'' \emph{IEEE Communications
  Magazine,}, vol.~52, no.~12, pp. 122--131, Dec. 2014.

\bibitem{Heath_JSTSP16}
R.~W. Heath~Jr, N.~Gonzalez-Prelcic, S.~Rangan, W.~Roh, and A.~M. Sayeed, ``An
  overview of signal processing techniques for millimeter wave {MIMO}
  systems,'' \emph{IEEE J. Sel. Top. Signal Process.}, vol.~10, no.~3, pp.
  436--453, 2016.

\bibitem{Bai_Tianyang_TWC15}
T.~Bai and R.~W. Heath, ``Coverage and rate analysis for millimeter-wave
  cellular networks,'' \emph{{IEEE} Trans. Wireless Commun.}, vol.~14, no.~2,
  pp. 1100--1114, Feb 2015.

\bibitem{Murmann_16}
\BIBentryALTinterwordspacing
B.~Murmann, ``{ADC} performance survey 1997-2016,'' 2016. [Online]. Available:
  \url{http://www.stanford.edu/~murmann/adcsurvey.html}
\BIBentrySTDinterwordspacing

\bibitem{Hong_Wonbin_COMM14}
W.~Hong, K.-H. Baek, Y.~Lee, Y.~Kim, and S.-T. Ko, ``{Study and prototyping of
  practically large-scale mmWave antenna systems for 5G cellular devices},''
  \emph{{IEEE} Commun. Mag.}, vol.~52, no.~9, pp. 63--69, September 2014.

\bibitem{Alkhateeb_COMM14}
A.~Alkhateeb, J.~Mo, N.~Gonzalez-Prelcic, and R.~W. Heath~Jr, ``{MIMO}
  precoding and combining solutions for millimeter-wave systems,'' \emph{{IEEE}
  Commun. Mag.}, vol.~52, no.~12, pp. 122--131, December 2014.

\bibitem{Ayach_TWC14}
O.~El~Ayach, S.~Rajagopal, S.~Abu-Surra, Z.~Pi, and R.~W. Heath~Jr, ``Spatially
  sparse precoding in millimeter wave {MIMO} systems,'' \emph{{IEEE} Trans.
  Wireless Commun.}, vol.~13, no.~3, pp. 1499--1513, March 2014.

\bibitem{Han2015}
S.~Han, C.-L. I, Z.~Xu, and C.~Rowell, ``Large-scale antenna systems with
  hybrid analog and digital beamforming for millimeter wave {5G},'' \emph{IEEE
  Communications Magazine}, vol.~53, no.~1, pp. 186--194, Jan. 2015.

\bibitem{Mo_Jianhua_TSP15}
J.~{Mo} and R.~W. {Heath}~Jr., ``Capacity analysis of one-bit quantized {MIMO}
  systems with transmitter channel state information,'' \emph{{IEEE} Trans.
  Signal Processing}, vol.~63, no.~20, pp. 5498--5512, Oct 2015.

\bibitem{Zhang2005a}
X.~Zhang, A.~Molisch, and S.~Kung, ``Variable-phase-shift-based {RF}-baseband
  codesign for {MIMO} antenna selection,'' \emph{IEEE Transactions on Signal
  Processing}, vol.~53, no.~11, pp. 4091--4103, Nov. 2005.

\bibitem{Venkateswaran2010}
V.~Venkateswaran and A.~van~der Veen, ``Analog beamforming in {MIMO}
  communications with phase shift networks and online channel estimation,''
  \emph{IEEE Transactions on Signal Processing}, vol.~58, no.~8, pp.
  4131--4143, Aug. 2010.

\bibitem{Alkhateeb2013}
A.~Alkhateeb, O.~El~Ayach, G.~Leus, and R.~Heath, ``Hybrid precoding for
  millimeter wave cellular systems with partial channel knowledge,'' in
  \emph{Proc. of Information Theory and Applications Workshop (ITA)}, Feb 2013,
  pp. 1--5.

\bibitem{Alkhateeb2014}
------, ``Channel estimation and hybrid precoding for millimeter wave cellular
  systems,'' \emph{IEEE Journal of Selected Topics in Signal Processing},
  vol.~8, no.~5, pp. 831--846, Oct. 2014.

\bibitem{Rusu_ICC15}
C.~Rusu, R.~M茅ndez-Rial, N.~Gonz谩lez-Prelcicy, and R.~W. Heath, ``Low
  complexity hybrid sparse precoding and combining in millimeter wave mimo
  systems,'' in \emph{Proceeedings of the 2015 IEEE International Conference on
  Communications (ICC)}, June 2015, pp. 1340--1345.

\bibitem{Chen2015}
C.-E. Chen, ``An iterative hybrid transceiver design algorithm for millimeter
  wave {MIMO} systems,'' \emph{IEEE Wireless Communications Letters}, vol.~4,
  no.~3, pp. 285--288, June 2015.

\bibitem{Bogale2014}
T.~Bogale and L.~B. Le, ``Beamforming for multiuser massive {MIMO} systems:
  Digital versus hybrid analog-digital,'' in \emph{IEEE Global Communications
  Conference (GLOBECOM)}, Dec. 2014, pp. 4066--4071.

\bibitem{Sohrabi2015}
F.~Sohrabi and W.~Yu, ``Hybrid digital and analog beamforming design for
  large-scale {MIMO} systems,'' in \emph{Proc. of the IEEE International Conf.
  on Acoustics, Speech and Signal Processing (ICASSP), Brisbane, Australia},
  Apr. 2015.

\bibitem{Liang2014}
L.~Liang, W.~Xu, and X.~Dong, ``Low-complexity hybrid precoding in massive
  multiuser {MIMO} systems,'' \emph{IEEE Wireless Communications Letters},
  vol.~3, no.~6, pp. 653--656, Dec 2014.

\bibitem{Venkateswaran_TSP10}
V.~Venkateswaran and A.-J. van~der Veen, ``Analog beamforming in {MIMO}
  communications with phase shift networks and online channel estimation,''
  \emph{{IEEE} Trans. Signal Processing}, vol.~58, no.~8, pp. 4131--4143, 2010.

\bibitem{Walden_JSAC99}
R.~Walden, ``Analog-to-digital converter survey and analysis,'' \emph{{IEEE} J.
  Select. Areas Commun.}, vol.~17, no.~4, pp. 539--550, 1999.

\bibitem{Le_Bin_SPM05}
B.~Le, T.~Rondeau, J.~Reed, and C.~Bostian, ``Analog-to-digital converters,''
  \emph{{IEEE} Signal Processing Mag.}, vol.~22, no.~6, pp. 69--77, 2005.

\bibitem{Mezghani_ISIT07}
A.~Mezghani and J.~Nossek, ``On ultra-wideband {MIMO} systems with 1-bit
  quantized outputs: Performance analysis and input optimization,'' in
  \emph{Proceeedings of IEEE International Symposium on Information Theory},
  2007, pp. 1286--1289.

\bibitem{Mezghani_WSA07}
A.~Mezghani, M.-S. Khoufi, and J.~A. Nossek, ``{A modified MMSE receiver for
  quantized MIMO systems},'' \emph{Proceedings of the ITG/IEEE WSA, Vienna,
  Austria}, 2007.

\bibitem{Mezghani_ISIT12}
A.~Mezghani and J.~Nossek, ``Capacity lower bound of {MIMO} channels with
  output quantization and correlated noise,'' in \emph{Proceeedings of IEEE
  International Symposium on Information Theory}, 2012.

\bibitem{Zhang_Wenyi_TCOM12}
W.~Zhang, ``A general framework for transmission with transceiver distortion
  and some applications,'' \emph{{IEEE} Trans. Commun.}, vol.~60, no.~2, pp.
  384--399, February 2012.

\bibitem{Mo_Jianhua_Asilomar14}
J.~Mo, P.~Schniter, N.~G. Prelcic, and R.~W. Heath~Jr., ``Channel estimation in
  millimeter wave {MIMO} systems with one-bit quantization,'' in
  \emph{Proceeedings of the 2014 48th Asilomar Conference on Signals, Systems
  and Computers}, Nov 2014, pp. 957--961.

\bibitem{Bai_Qing_ETT15}
Q.~Bai and J.~Nossek, ``Energy efficiency maximization for {5G} multi-antenna
  receivers,'' \emph{Transactions on Emerging Telecommunications Technologies},
  vol.~26, no.~1, pp. 3--14, 2015.

\bibitem{Orhan_ITA15}
O.~{Orhan}, E.~{Erkip}, and S.~{Rangan}, ``{Low Power Analog-to-Digital
  Conversion in Millimeter Wave Systems: Impact of Resolution and Bandwidth on
  Performance},'' in \emph{Proc. of Information Theory and Applications (ITA)
  Workshop}, 2015.

\bibitem{Wang_Shengchu_TWC15}
S.~Wang, Y.~Li, and J.~Wang, ``Multiuser detection in massive spatial
  modulation {MIMO} with low-resolution {ADCs},'' \emph{{IEEE} Trans. Wireless
  Commun.}, vol.~14, no.~4, pp. 2156--2168, April 2015.

\bibitem{Jacobsson_arxiv15}
S.~Jacobsson, G.~Durisi, M.~Coldrey, U.~Gustavsson, and C.~Studer, ``One-bit
  massive {MIMO}: Channel estimation and high-order modulations,'' \emph{arXiv
  preprint arXiv:1504.04540}, 2015.

\bibitem{Singh_TCOM09}
J.~Singh, O.~Dabeer, and U.~Madhow, ``On the limits of communication with
  low-precision analog-to-digital conversion at the receiver,'' \emph{{IEEE}
  Trans. Commun.}, vol.~57, no.~12, pp. 3629--3639, 2009.

\bibitem{Mo_Jianhua_ITA14}
J.~Mo and R.~W. Heath~Jr., ``High {SNR} capacity of millimeter wave {MIMO}
  systems with one-bit quantization,'' in \emph{Proceeedings of Information
  Theory and Applications Workshop (ITA), 2014}, Feb 2014, pp. 1--5.

\bibitem{Choi_TCOM16}
J.~Choi, J.~Mo, and R.~W. Heath~Jr., ``Near maximum-likelihood detector and
  channel estimator for uplink multiuser massive {MIMO} systems with one-bit
  {ADCs},'' \emph{{IEEE} Trans. Commun.}, vol.~64, no.~5, pp. 2005--2018, May
  2016.

\bibitem{Mollen_arxiv16}
\BIBentryALTinterwordspacing
C.~Mollen, J.~Choi, E.~G. Larsson, and R.~W.~H. Jr., ``Performance of the
  wideband massive uplink {MIMO} with one-bit {ADCs},'' \emph{CoRR}, vol.
  abs/1602.07364, 2016. [Online]. Available:
  \url{http://arxiv.org/abs/1602.07364}
\BIBentrySTDinterwordspacing

\bibitem{Jacobsson_arxiv16}
\BIBentryALTinterwordspacing
S.~Jacobsson, G.~Durisi, M.~Coldrey, U.~Gustavsson, and C.~Studer, ``Massive
  {MIMO} with low-resolution {ADCs},'' \emph{CoRR}, vol. abs/1602.01139, 2016.
  [Online]. Available: \url{http://arxiv.org/abs/1602.01139}
\BIBentrySTDinterwordspacing

\bibitem{Studer_TCOM16}
C.~Studer and G.~Durisi, ``Quantized massive {MU-MIMO-OFDM} uplink,''
  \emph{IEEE Trans. Commun.}, vol.~64, no.~6, pp. 2387--2399, June 2016.

\bibitem{Murmann_FTFC13}
B.~Murmann, ``Energy limits in {A/D} converters,'' in \emph{Faible Tension
  Faible Consommation (FTFC), 2013 IEEE}, June 2013, pp. 1--4.

\bibitem{Ghauch2015}
H.~Ghauch, T.~Kim, M.~Bengtsson, and M.~Skoglund, ``Subspace estimation and
  decomposition for large millimeter-wave mimo systems,'' \emph{submitted to
  IEEE Journal of Selected Topics in Signal Processing, arXiv preprint
  arXiv:1507.00287}, 2015.

\bibitem{Mo_Jianhua_arxiv16b}
J.~Mo, P.~Schniter, and R.~W. Heath~Jr, ``Channel estimation in broadband
  millimeter wave {MIMO} systems with few-bit {ADCs},'' \emph{arXiv preprint
  arXiv:1610.02735}, 2016.

\bibitem{Kamilov_TSP12}
U.~Kamilov, V.~Goyal, and S.~Rangan, ``Message-passing de-quantization with
  applications to compressed sensing,'' \emph{{IEEE} Trans. Signal Processing},
  vol.~60, no.~12, pp. 6270--6281, Dec 2012.

\bibitem{Molisch_CL04}
A.~Molisch and X.~Zhang, ``{FFT-based hybrid antenna selection schemes for
  spatially correlated MIMO channels},'' \emph{{IEEE} Commun. Lett.}, vol.~8,
  no.~1, pp. 36--38, Jan 2004.

\bibitem{Rao_Math79}
C.~R. Rao, ``Separation theorems for singular values of matrices and their
  applications in multivariate analysis,'' \emph{Journal of Multivariate
  Analysis}, vol.~9, no.~3, pp. 362--377, 1979.

\bibitem{Palomar_TSP03}
D.~P. Palomar, J.~M. Cioffi, and M.~A. Lagunas, ``{Joint Tx-Rx beamforming
  design for multicarrier MIMO channels: a unified framework for convex
  optimization},'' \emph{{IEEE} Trans. Signal Processing}, vol.~51, no.~9, pp.
  2381--2401, Sept 2003.

\bibitem{Tropp_IT05}
J.~A. Tropp, I.~S. Dhillon, R.~W. Heath, and T.~Strohmer, ``Designing
  structured tight frames via an alternating projection method,'' \emph{{IEEE}
  Trans. Inform. Theory}, vol.~51, no.~1, pp. 188--209, 2005.

\bibitem{Sayeed_TSP02}
A.~Sayeed, ``Deconstructing multiantenna fading channels,'' \emph{{IEEE} Trans.
  Signal Processing}, vol.~50, no.~10, pp. 2563--2579, Oct 2002.

\bibitem{Cover_Book12}
T.~M. Cover and J.~A. Thomas, \emph{Elements of information theory}.\hskip 1em
  plus 0.5em minus 0.4em\relax John Wiley \& Sons, 2012.

\bibitem{Proakis_Book08}
J.~G. Proakis, ``Digital communications.'' \emph{McGraw-Hill, New York}, 2008.

\bibitem{Fletcher_JSTSP07}
A.~Fletcher, S.~Rangan, V.~Goyal, and K.~Ramchandran, ``Robust predictive
  quantization: Analysis and design via convex optimization,'' vol.~1, no.~4,
  pp. 618--632, Dec 2007.

\bibitem{Max_IRE60}
J.~Max, ``Quantizing for minimum distortion,'' \emph{IRE Transactions on
  Information Theory}, vol.~6, no.~1, pp. 7--12, 1960.

\bibitem{Lloyd_IT82}
S.~Lloyd, ``Least squares quantization in {PCM},'' \emph{{IEEE} Trans. Inform.
  Theory}, vol.~28, no.~2, pp. 129--137, Mar 1982.

\bibitem{Gersho_Book12}
A.~Gersho and R.~M. Gray, \emph{Vector quantization and signal
  compression}.\hskip 1em plus 0.5em minus 0.4em\relax Springer Science \&
  Business Media, 2012, vol. 159.

\bibitem{Akdeniz_JSAC14}
M.~Akdeniz, Y.~Liu, M.~Samimi, S.~Sun, S.~Rangan, T.~Rappaport, and E.~Erkip,
  ``Millimeter wave channel modeling and cellular capacity evaluation,''
  \emph{{IEEE} J. Select. Areas Commun.}, vol.~32, no.~6, pp. 1164--1179, June
  2014.

\bibitem{Rappaport_TCOM15}
T.~Rappaport, G.~Maccartney, M.~Samimi, and S.~Sun, ``Wideband millimeter-wave
  propagation measurements and channel models for future wireless communication
  system design,'' \emph{{IEEE} Trans. Commun.}, vol.~63, no.~9, pp.
  3029--3056, Sept 2015.

\bibitem{WP_5G_Channel_Model}
``{5G} channel model for bands up to100 {GHz},'' Tech. Rep., 2015.

\bibitem{Mendez-Rial_Access16}
R.~Méndez-Rial, C.~Rusu, N.~González-Prelcic, A.~Alkhateeb, and R.~W.~H. Jr,
  ``Hybrid {MIMO} architectures for millimeter wave communications: Phase
  shifters or switches?'' \emph{IEEE Access}, vol.~4, pp. 247--267, 2016.

\bibitem{Lin_Yu-Hsuan_IMS16}
Y.-H. Lin and H.~Wang, ``{A low phase and gain error passive phase shifter in
  90 nm CMOS for 60 GHz phase array system application},'' in \emph{Proceedings
  of the 2016 IEEE MTT-S International Microwave Symposium (IMS)}, May 2016,
  pp. 1--4.

\bibitem{Ali_JSSC14}
A.~M.~A. Ali, H.~Dinc, P.~Bhoraskar, C.~Dillon, S.~Puckett, B.~Gray, C.~Speir,
  J.~Lanford, J.~Brunsilius, P.~R. Derounian, B.~Jeffries, U.~Mehta, M.~McShea,
  and R.~Stop, ``{A 14 Bit 1 GS/s RF Sampling Pipelined ADC With Background
  Calibration},'' \emph{IEEE J. Solid-State Circuits}, vol.~49, no.~12, pp.
  2857--2867, Dec 2014.

\bibitem{Dabeer_SPAWC06}
O.~Dabeer, J.~Singh, and U.~Madhow, ``On the limits of communication
  performance with one-bit analog-to-digital conversion,'' in \emph{Proceedings
  of the IEEE 7th Workshop on Signal Processing Advances in Wireless
  Communications}, 2006, pp. 1--5.

\end{thebibliography}
\end{document}